\IfFileExists{venue_preamble.tex}{%
  \def\VenueMode{1}
  \input{venue_preamble}
}{%
  \documentclass[11pt]{article}
  \usepackage[margin=1in]{geometry}
}

\usepackage[T1]{fontenc}
\usepackage{microtype}
\usepackage{amsmath}
\usepackage{amssymb}
\usepackage{amsthm}
\usepackage{aliascnt}
\usepackage{mathtools}
\usepackage{graphicx}
\usepackage{booktabs}
\usepackage{xcolor}
\usepackage{hyperref}

\ifdefined\VenueMode
\else
  \hypersetup{
    colorlinks=true,
    citecolor=blue,
    linkcolor=blue,
    urlcolor=blue
  }
\fi

\usepackage[nameinlink,capitalize,noabbrev]{cleveref}

\graphicspath{{figures/}}
\allowdisplaybreaks

\newcommand{\R}{\mathbb{R}}
\newcommand{\Q}{\mathbb{Q}}
\newcommand{\N}{\mathbb{N}}

\newcommand{\norm}[1]{\left\lVert #1\right\rVert}
\newcommand{\abs}[1]{\left\lvert #1\right\rvert}
\newcommand{\set}[1]{\left\{#1\right\}}

\newcommand{\ReLU}{\operatorname{ReLU}}
\newcommand{\Lip}{\operatorname{Lip}}
\newcommand{\OPT}{\operatorname{OPT}}
\newcommand{\poly}{\operatorname{poly}}
\newcommand{\sign}{\operatorname{sign}}
\newcommand{\conv}{\operatorname{conv}}
\newcommand{\grad}{\nabla}

\theoremstyle{plain}
\newtheorem{theorem}{Theorem}[section]
\newaliascnt{lemma}{theorem}
\newtheorem{lemma}[lemma]{Lemma}
\aliascntresetthe{lemma}
\newaliascnt{proposition}{theorem}
\newtheorem{proposition}[proposition]{Proposition}
\aliascntresetthe{proposition}
\newaliascnt{corollary}{theorem}
\newtheorem{corollary}[corollary]{Corollary}
\aliascntresetthe{corollary}
\newaliascnt{conjecture}{theorem}

\aliascntresetthe{conjecture}

\theoremstyle{definition}
\newaliascnt{definition}{theorem}
\newtheorem{definition}[definition]{Definition}
\aliascntresetthe{definition}
\newaliascnt{assumption}{theorem}

\aliascntresetthe{assumption}
\newaliascnt{example}{theorem}

\aliascntresetthe{example}
\newaliascnt{problem}{theorem}
\newtheorem{problem}[problem]{Problem}
\aliascntresetthe{problem}
\newaliascnt{openproblem}{theorem}
\newtheorem{openproblem}[openproblem]{Open Problem}
\aliascntresetthe{openproblem}

\crefname{lemma}{Lemma}{Lemmas}
\Crefname{lemma}{Lemma}{Lemmas}
\crefname{proposition}{Proposition}{Propositions}
\Crefname{proposition}{Proposition}{Propositions}
\crefname{corollary}{Corollary}{Corollaries}
\Crefname{corollary}{Corollary}{Corollaries}
\crefname{definition}{Definition}{Definitions}
\Crefname{definition}{Definition}{Definitions}
\crefname{assumption}{Assumption}{Assumptions}
\Crefname{assumption}{Assumption}{Assumptions}
\crefname{example}{Example}{Examples}
\Crefname{example}{Example}{Examples}
\crefname{problem}{Problem}{Problems}
\Crefname{problem}{Problem}{Problems}
\crefname{openproblem}{Open Problem}{Open Problems}
\Crefname{openproblem}{Open Problem}{Open Problems}
\crefname{remark}{Remark}{Remarks}
\Crefname{remark}{Remark}{Remarks}

\theoremstyle{remark}
\newaliascnt{remark}{theorem}

\aliascntresetthe{remark}

\newcommand{\PaperTitle}{$\ell_p$-Norm Maximization over Zonotopes Is W[1]-Hard}
\newcommand{\PaperAuthors}{Yang Cao \and Haoran Qi \and Hanzhi Wang}
\newcommand{\PaperDate}{}

\newif\ifPaperPrintReferences
\PaperPrintReferencestrue

\title{\PaperTitle}
\author{\PaperAuthors}
\date{\PaperDate}

\providecommand{\PaperMakeTitle}{\maketitle}
\providecommand{\PaperPrintBibliography}{%
  \ifPaperPrintReferences
    \bibliographystyle{alpha}
    \bibliography{references}
  \fi}
\providecommand{\PaperEndMatter}{}

\begin{document}

\PaperMakeTitle

\begin{abstract}
We study $\ell_p$-norm maximization over zonotopes given by rational generators, with input length $L$. For fixed $p=a/b>1$, the exact Turing baseline runs in $n^{O(d)}b^{O(d)}\poly(L)$ time, but fixed-parameter tractability in the ambient dimension $d$ was open \cite{fgh+25}. We prove W[1]-hardness and, under the Exponential Time Hypothesis (ETH), exclude $\rho_p(d)L^{o(d)}$ time, even for $5$-sparse generators, by encoding binary CSP constraints with normalized positive cap generators. We also give a deterministic $(1-\varepsilon)$-approximation with $\varepsilon^{-(d-1)/2}$ dependence and, among algorithms with fixed-degree polynomial dependence on $L$, rule out $(1/\varepsilon)^{o(d)}$ dependence under ETH. Support-function duality transfers the results to positive-output two-layer ReLU networks.

\end{abstract}

\section{Introduction}

Given rational generators $a_1,\ldots,a_n\in\Q^d$, let
\begin{align*}
Z(A)
&\coloneqq A[0,1]^n
=\sum_{i=1}^n[0,a_i]
\end{align*}
be their zonotope.  We study the farthest point of $Z(A)$ in a fixed $\ell_p$ norm.

\begin{problem}[Fixed-$\ell_p$ norm maximization over zonotopes]
\label[problem]{prob:lvs}
Fix a rational constant $p\in(1,\infty)$.  Given a rational matrix $A\in\Q^{d\times n}$ and a rational threshold $B>0$, decide whether
\begin{align*}
\max_{z\in Z(A)}\norm{z}_p^p
&\geq B.
\end{align*}
The parameter is the ambient dimension $d$, and $L$ denotes the total binary encoding length of the rational data in the instance.
\end{problem}

Since the norm is convex, its maximum over $Z(A)=A[0,1]^n$ is attained at the image of a cube vertex.  Thus the associated optimization value is
\begin{align*}
R_p(A)
&\coloneqq\max_{z\in Z(A)}\norm{z}_p
=\max_{x\in\set{0,1}^n}\norm{Ax}_p.
\end{align*}
The Boolean formulation is also known as fixed-$\ell_p$ Longest Vector Sum.  When $p=2$, $R_p(A)^2$ is the positive-semidefinite binary quadratic objective $\max_x x^\top A^\top Ax$, whose matrix $A^\top A$ has rank at most $d$.

The central question is whether the problem is fixed-parameter tractable in the ambient dimension.
\begin{openproblem}[\cite{fgh+25}]
Is $L_p$-Maximization over $d$-Zonotopes fixed-parameter tractable with respect to $d$ for $p>1$?
\end{openproblem}

The classical vertex-enumeration algorithm places the problem in XP.
\begin{theorem}[Theorem 2.1 of \cite{fgh+26}]
\label{thm:known-xp}
With $n$ generators in dimension $d$ and input bit-length $N$, Zonotope Containment and $L_p$-Max on Zonotopes are solvable in $\mathcal O(n^{d-1}\poly(N))$ time.
\end{theorem}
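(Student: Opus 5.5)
The plan is the classical vertex-enumeration argument. First, reduce the optimization to a finite candidate set. Because $z\mapsto\norm{z}_p$ and $z\mapsto\norm{z}_p^p$ are convex, the maximum over the polytope $Z(A)$ is attained at a vertex. So it suffices to enumerate all vertices of $Z(A)$, evaluate $\norm{v}_p^p$ at each one, and compare with $B$. The evaluation can be done exactly even for rational $p=a/b$: compare $\sum_j\abs{v_j}^{a/b}$ with $B$ by the usual exact methods, or treat $p$ as a fixed constant so that the cost is absorbed into $\poly(N)$.

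Second, characterize the vertices through the normal fan. A point $v$ is a vertex of $Z(A)$ exactly when there is a direction $c\in\R^d$ with $c^\top a_i\neq 0$ for all $i$ and
\begin{align*}
v=\sum_{i:\,c^\top a_i>0}a_i .
\end{align*}
Hence the vertices correspond to the full-dimensional cells of the central hyperplane arrangement $\set{c: c^\top a_i=0}_{i\le n}$ in $\R^d$. Parallel generators can be merged beforehand, and we may assume $\operatorname{rank}A=d$ after projecting to the span.

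The classical bound states that a central arrangement of $n$ hyperplanes in $\R^d$ has
\begin{align*}
2\sum_{k=0}^{d-1}\binom{n-1}{k}=\mathcal O(n^{d-1})
\end{align*}
cells. To enumerate them, I would run the standard incremental construction, or the following direct scheme. Every cell has a boundary ray, namely a $1$-dimensional flat obtained by intersecting $d-1$ linearly independent hyperplanes. For each of the $\mathcal O(n^{d-1})$ choices of $d-1$ generators, compute the line $\ell$ they determine by exact rational Gaussian elimination, taking $\poly(N)$ time. Then resolve the sign pattern of the remaining generators near each of the two directions $\pm$ of $\ell$, using a symbolic perturbation: the generators vanishing on $\ell$ get their signs from a lexicographic tie-break. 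This produces the cells adjacent to that ray.

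The main obstacle is degeneracy. When many hyperplanes pass through the same ray, the number of adjacent cells is not constant, so a naive count could exceed $n^{d-1}$. Likewise, the vertex attached to a ray-plus-perturbation must be shown to be a genuine vertex, and every vertex must be reached. I would handle this with a lexicographic (simulation-of-simplicity) perturbation of the generators. The perturbation makes the arrangement simple while only splitting cells, so every vertex of the original zonotope still appears as the image of some perturbed cell. It also keeps the total number of cells at $\mathcal O(n^{d-1})$ and keeps all arithmetic on rationals of $\poly(N)$ bit-length.

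Zonotope Containment reduces in the same way. Checking whether each vertex of one zonotope lies in the other is an LP-feasibility test, done in $\poly(N)$ time per vertex. Alternatively one can use the facet description: the facets are indexed by $(d-1)$-subsets of generators, which gives the same $\mathcal O(n^{d-1}\poly(N))$ total.
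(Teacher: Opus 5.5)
Your combinatorial core matches the argument the paper relies on in \cref{prop:turing-xp}: discard zero columns, enumerate the full-dimensional cells of the central arrangement $a_i^\top y=0$, and use that every vertex of $Z(A)$ is exposed by an interior point of its normal cone.

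\textbf{The genuine gap is the evaluation step.} For nonintegral $p=a/b$, deciding $\sum_j\abs{v_j}^{a/b}\geq B$, or comparing two candidates, is a sign problem for a sum of up to $2d$ real algebraic numbers, each of degree at most $b$. For $p=3/2$ it is a sum-of-square-roots comparison, since $\abs{v_j}^{3/2}=\sqrt{\abs{v_j}^3}$. Fixing $p$ does not make this cost $\poly(N)$. The number of radicals grows with $d$, and the degree of their sum can grow exponentially in $d$. No polynomial-time method is known for such comparisons in general. The paper flags exactly this right after citing the theorem. \Cref{prop:turing-xp} then handles it in three steps: it builds an annihilating polynomial of the signed sum by iterated resultants, bounds its nonzero roots away from zero, and decides the sign by rational bisection. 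This costs $b^{O(d)}\poly(N)$ bit operations per comparison. With that subroutine your argument gives $\mathcal O(n^{d-1})\,b^{O(d)}\poly(N)$ in the Turing model. The stated $\mathcal O(n^{d-1}\poly(N))$ follows only for integer $p$, where $\norm{v}_p^p$ is a rational of polynomial bit length, or in a model with exact real arithmetic. The containment half does not involve $p$ and is fine as you wrote it.

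\textbf{Your enumeration paragraph also needs repair, though it is fixable.} A single lexicographic tie-break at a ray selects only one sign pattern for the generators vanishing there, so it does not produce ``the cells adjacent to that ray.'' The clean version applies one global symbolic perturbation that puts the generators in general position. Then every ray of the perturbed arrangement lies on exactly $d-1$ of its hyperplanes. For each $(d-1)$-subset, each of the two directions of its line, and each of the $2^{d-1}$ sign patterns on the subset, you output the resulting subset sum. This gives $2^d\binom{n}{d-1}\leq 4n^{d-1}$ candidates in total.

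Contrary to your phrase ``only splitting cells,'' the perturbation also creates sign vectors that the original arrangement does not realize, wherever more than $d-1$ hyperplanes meet in a line. So some candidates are not vertices of $Z(A)$. You do not need to prove they are: they are subset sums, hence points of $Z(A)$. That is all the maximization and the vertex-based containment test require. Meanwhile every original cell survives, because an interior point keeps all its signs under a sufficiently small perturbation.
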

For a nonintegral rational exponent, an ordinary Turing implementation must also account for exact comparison of sums of algebraic numbers.  \Cref{prop:turing-language} gives the binary decision language explicitly, and \cref{prop:turing-xp} proves that for $p=a/b$ each candidate comparison takes $b^{O(d)}\poly(L)$ bit operations.  The resulting exact Turing running time is $n^{O(d)}b^{O(d)}\poly(L)$.
The exponent $d-1$ in \cref{thm:known-xp} is exactly the obstacle: fixed-parameter tractability would require a running time $f(d)\poly(L)$ whose polynomial degree is independent of $d$.  The question remained open after the parameterized classification of adjacent zonotope and neural-network problems in \cite{fgh+26}.  The same open-problem note separately asked whether zonotope containment is fixed-parameter tractable; that question was resolved in \cite{fgh+26}, whereas the norm-maximization question above was retained.  Ordinary NP-hardness does not settle this question, since an NP-hard problem can still be fixed-parameter tractable in $d$.  The endpoint cases are also inconclusive: $\ell_1$ maximization over generator zonotopes is fixed-parameter tractable, whereas $\ell_\infty$ maximization is polynomial-time solvable \cite{bp07,fgh+26}.

For halfspace-presented symmetric polytopes, \cite{kkw15} proved the corresponding W[1]-hardness by placing labels on a product of planar $\ell_p$ spheres and intersecting the product with graph-dependent strips.  The strips delete forbidden label pairs, but intersection with a strip does not preserve a Minkowski sum of segments.  A reduction for generator zonotopes must instead encode binary relations using only additional generators.

\begin{theorem}[Generator-zonotope hardness, informal version of \cref{thm:exact-hardness}]
\label{thm:main-informal}
For every fixed rational $p\in(1,\infty)$, \cref{prob:lvs} is W[1]-hard parameterized by $d$, even when every column of $A$ is $5$-sparse.  A $k$-variable binary CSP maps to dimension $2k+1$, and under the Exponential Time Hypothesis (ETH) there is no algorithm with running time $\rho_p(d)L^{o(d)}$ for any computable function $\rho_p$.
\end{theorem}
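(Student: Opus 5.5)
We reduce from binary CSP with $k$ variables over a domain $[N]$. This is W[1]-hard parameterized by $k$, and under ETH it admits no $f(k)N^{o(k)}$ algorithm (Marx). The target instance has dimension $d=2k+1$: one plane per variable, plus a single shared coordinate. The reduction must preserve the parameter linearly and keep $L=\poly(N,|\text{constraints}|)$, so that an algorithm running in $\rho_p(d)L^{o(d)}$ time would give a CSP algorithm running in $f(k)N^{o(k)}$ time, contradicting ETH. W[1]-hardness follows from the same parameterized reduction.

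\textbf{Variable gadgets.} For variable $v$, we place $N$ points $u_{v,1},\dots,u_{v,N}$ on the positive arc of the planar unit $\ell_p$ sphere, $\norm{u}_p=1$, rationalized to polynomial precision. We add the consecutive differences $u_{v,j+1}-u_{v,j}$ as generators in the plane of $v$. A sum of a contiguous block of these generators, together with the base point, moves along a planar convex polygon. Maximizing the $\ell_p$ norm in that plane then forces the chosen partial sum to be a vertex lying near the sphere, and vertices correspond to labels. The strict convexity of $\ell_p$ for $p>1$ gives a quantitative gap: a label vertex has norm $1$, and any non-vertex or off-arc point loses at least $\delta=1/\poly(N)$ in $\norm{\cdot}_p^p$.

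\textbf{Constraint gadgets.} For each constraint $(v,w)$ and each allowed pair $(i,j)$, we add a ``positive cap'' generator supported on the two planes of $v,w$ and the shared coordinate. Together with a possible normalization entry this gives at most $5$ nonzeros. The generator is scaled so that it increases $\norm{z}_p^p$ by a fixed bonus $\beta$ exactly when the current labels of $v,w$ are $(i,j)$. Otherwise it contributes at most $\beta-\gamma$, or a nonpositive amount. The design I would try first is a generator pointing along the outward normal (the gradient of $\norm{\cdot}_p^p$) at the pair $(u_{v,i},u_{w,j})$. Its first-order gain is then maximized at that pair and drops quadratically elsewhere, and a small component along the shared coordinate normalizes the bonus. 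The threshold is then $B=k+|E|\beta$ minus a slack.

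\textbf{Correctness.} Completeness is immediate: a satisfying assignment selects its vertices and one cap per constraint. For soundness, we take an optimal $x\in\set{0,1}^n$ and expand $\norm{Ax}_p^p$ to second order. The cap generators must be small enough, $\beta\ll\delta$, that they cannot pull the variable parts off label vertices, yet large enough that the sum of $|E|$ bonuses separates satisfying assignments from non-satisfying ones. This requires controlling the cross-interactions of several caps chosen for the same constraint, or caps that share a variable. Bounding the Taylor remainder of $t\mapsto|t|^p$ uniformly requires coordinates that stay bounded away from $0$, which is why we use the positive arcs. We also need the rational approximations of $u_{v,i}$ to keep every error within $1/\poly(N)$.

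The main obstacle is this soundness estimate. We must show that picking several caps, or caps for non-matching labels, never beats the intended configuration. This means exhibiting an explicit hierarchy of scales $\delta\gg\beta\gg\gamma\gg$ (second-order cap interactions) $\gg$ (rounding error), all polynomial in $N$ and with $p$-dependent constants. I expect this to need careful case analysis on how many caps per constraint are active.
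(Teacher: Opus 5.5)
There is a genuine gap, located at the step you flag as the main obstacle: your caps do not make off-target selections unprofitable, and the outward-normal design cannot.

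\textbf{Why the cap gadget fails.} In the generator model the optimizer may take any subset of a constraint's caps. At a global optimum, every cap with positive first-order score is selected. All your labels lie in the positive quadrant, so your cap has positive first-order gain at every label assignment. At the target pair the gain is $\beta$; elsewhere it is $\beta-\gamma$ with $\gamma\ll\beta$. Hence essentially all caps are selected whatever the assignment, and the objective becomes roughly $k+\sum_{\text{caps}}(\beta-\gamma_{\text{cap}})$. Comparisons between assignments are then governed by geometric label distances, not by which constraints are satisfied. No case analysis on how many caps per constraint are active can repair this.

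The ``small component along the shared coordinate'' does not help either. You never add a generator that populates that coordinate; your threshold $k+\abs{E}\beta$ has no such term. If the coordinate stays near $0$, then $\partial F_p/\partial z_{2k+1}\approx0$ for $p>1$, so a cap entry there has no first-order effect.

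\textbf{The missing idea: an anchored negative offset.} The paper adds the anchor generator $e_{2k+1}$ and proves it is present at every optimum. Without it the value is at most $k(1+\eta)+1/4$; with it the value is at least $k(1-\eta)+1$. Each raw cap is $b_{rs}=(r,s,-\theta)$ on the two label blocks and the anchor, with $\theta=2-4\Gamma$. Distinct labels satisfy $J_p(v)^\top w\leq1-8\Gamma$. So the score $J_p(r)^\top r+J_p(s)^\top s-\theta$ is about $+4\Gamma$ at the target and at most $\eta-4\Gamma<0$ at every other pair. After normalization the target score is about $1$ and every wrong score is below $-4/5$ (\cref{lem:cap-normalization}). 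The penalty is thus of the same order as the bonus, not a $\gamma\ll\beta$ correction.

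This combines with two further ingredients:
\begin{itemize}
\item the self-consistency lemma: at a global maximizer of a strictly convex $\Phi$, generator $i$ is selected iff $\grad\Phi(z)^\top a_i>0$;
\item a Hessian bound showing the $\tau$-scaled caps do not flip these signs.
\end{itemize}
Together they give that a cap is selected exactly when it matches the encoded assignment. Each constraint therefore contributes one cap if satisfied and none otherwise.

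\textbf{A secondary point in the variable gadget.} Subsets of the block generators are arbitrary, not contiguous, and the base point must itself be a generator. The claimed $1/\poly(N)$ loss for non-vertex subset sums does not follow from strict convexity alone. For example, $u_j-g_{j-1}+g_j$ lies within $O(N^{-2})$ of $u_j$, so any such bound needs a genuine discrete argument. Self-consistency avoids this: the block selection follows the sign pattern of a linear functional, so each block sum is a vertex. The paper also inserts both $g_j$ and $-g_j$, making the block zonogon $\sum_j[-g_j,g_j]$, so that every vertex is a label.
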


Together with \cref{prop:turing-xp}, the ETH lower bound shows that the linear dependence of the input-size exponent on $d$ is optimal up to constant factors.

A size-restricted specialization is sharper: \cref{cor:small-instance-lower} rules out $2^{o(d\log d)}\poly(L)$ time under ETH even when $n,L=d^{O(1)}$, while \cref{prop:turing-xp} runs in $2^{O(d\log d)}$ time on the same family.

The same generator matrix defines the positive homogeneous network
\begin{align*}
f_A(u)
&\coloneqq\sum_{i=1}^n\ReLU(a_i^\top u).
\end{align*}
For dual exponents $p$ and $q$, its $\ell_q$-Lipschitz constant is $R_p(A)$.  The hardness therefore holds with every output weight equal to $+1$, every bias zero, and every hidden weight $5$-sparse.  This is the positive-output case isolated in \cite{fgh+25,fgh+26}; the signed-output reductions for general two-layer networks do not preserve a single generator zonotope.

The cap gap and a deterministic upper bound give the following precision frontier.

\begin{theorem}[Precision frontier, informal versions of \cref{thm:precision-upper,thm:precision-lower}]
\label{thm:precision-main-informal}
For every fixed rational $p\in(1,\infty)$ and rational $\varepsilon\in(0,1/2)$, there is a deterministic algorithm that returns a subset $\widehat x\in\set{0,1}^n$ satisfying
\begin{align*}
\norm{A\widehat x}_p
&\geq(1-\varepsilon)R_p(A)
\end{align*}
in $f_p(d)\varepsilon^{-(d-1)/2}\poly(L,\log(1/\varepsilon))$ bit operations and polynomial working space.  Conversely, unless FPT$=$W[1], no algorithm with running time $f_p(d)\poly(L,1/\varepsilon)$ guarantees a $(1-\varepsilon)$ ratio.  Under ETH, no such approximation has running time $\rho_p(d)(1/\varepsilon)^{h(d)}L^c$ for any computable $\rho_p$, constant $c$, and function $h(d)=o(d)$.  The lower bounds hold even for $5$-sparse columns.
\end{theorem}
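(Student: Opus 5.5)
The statement has two halves, and I will prove them separately. The upper bound comes from support-function duality and a deterministic net of directions on the sphere. The lower bound comes from the exact hardness reduction of \cref{thm:main-informal}, together with a quantitative cap gap.

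\emph{Upper bound.} By duality,
\begin{align*}
R_p(A)=\max_{\norm{u}_q\le 1}\max_{x\in\set{0,1}^n}u^\top Ax=\max_{\norm{u}_q\le1}\sum_i \max(0,a_i^\top u).
\end{align*}
For a fixed direction $u$, the inner maximizer is the greedy subset $x_u=\mathbf 1[a_i^\top u>0]$. The plan has four steps.
\begin{enumerate}
\item Show that the optimal face is stable. Let $z^\ast=Ax^\ast$ be optimal and let $u^\ast$ be its dual direction, the normalized gradient of $\norm{\cdot}_p$ at $z^\ast$. For any $u$ within angle $\theta$ of $u^\ast$, we have $\norm{Ax_u}_p\ge u^\top Ax_u/\norm{u}_q\ge u^\top z^\ast/\norm u_q$.
\item Expand this bound to second order. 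Because $u^\ast$ maximizes $u\mapsto u^\top z^\ast/\norm{u}_q$, the first-order term vanishes and the loss is $O_p(\theta^2)R_p$. A second-order bound is needed because the $\ell_q$ ball is not uniformly smooth when $q<2$. I would therefore work in Euclidean directions $v=u/\norm{u}_2$ and use that $\norm{\cdot}_q$ is comparable to $\norm{\cdot}_2$ up to $f_p(d)$ factors. I expect local smoothness of $\norm{\cdot}_q$ near $u^\ast$ to follow from $z^\ast$ having bounded aspect, possibly after a preliminary coordinate restriction handled by enumerating supports ($2^d$ choices).
\item Choose the net. Taking $\theta=c_p\sqrt{\varepsilon}$, a deterministic $\theta$-net of $S^{d-1}$ has size $f(d)\theta^{-(d-1)}=f(d)\varepsilon^{-(d-1)/2}$. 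It can be enumerated in polynomial space, for example as a rounded grid on the faces of the cube mapped to the sphere.
\item Evaluate each candidate. For each net point compute $x_u$ and compare $\norm{Ax_u}_p^p$ exactly or to relative precision $\varepsilon/4$, which costs $\poly(L,\log 1/\varepsilon)$ for rational $p$ via approximate $p$-th powers. Keep the best candidate.
\end{enumerate}

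\emph{Lower bounds.} I reuse the reduction of \cref{thm:exact-hardness}, which maps a $k$-variable binary CSP to dimension $d=2k+1$ with $5$-sparse columns. It produces a threshold $B$ such that satisfiable instances reach $B$, while unsatisfiable ones have value at most $B(1-\gamma)$. The step needing care is to show that the cap-generator construction yields a gap $\gamma\ge 1/g(k)$ depending only on $k$ and $p$, not on $n$ or $L$. I expect this step to be the main obstacle.

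I would first try to bound how much an inconsistent label pair, or a missing label, loses on the normalized cap of the planar $\ell_p$ sphere. Since labels can be placed at angular spacing depending only on the domain size, I would replace the domain by an instance of Multicolored Clique/Grid Tiling with domain size $N$. Then $\gamma\ge N^{-O(1)}$, which is polynomial rather than dependent on $k$ alone.

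Hence I would set $\varepsilon=\gamma/2=N^{-O(1)}\le L^{-O(1)}$. An $f_p(d)\poly(L,1/\varepsilon)$-time approximation would then decide the W[1]-hard problem in FPT time. For ETH, a running time $\rho_p(d)(1/\varepsilon)^{h(d)}L^c$ becomes $\rho_p(d)L^{O(h(d))+c}=\rho_p(d)L^{o(d)}$, which contradicts the exact ETH bound. The exact ETH bound itself comes from the known $f(k)N^{o(k)}$ lower bound for binary CSP with $d=O(k)$. This is why the hypothesis fixes the degree $c$.
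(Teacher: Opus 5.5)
Your lower-bound half follows the paper's route. The gap you flag as the main obstacle is already supplied by the encoder: by \cref{lem:quantitative-gap}, the YES value $Y\le k+2$ exceeds the NO bound $U$ by more than $4\tau/5$, with $\tau\ge1/(c_pk^4N^8)$. Requesting $\varepsilon=\tau/(10p(k+2))$ and using Bernoulli's inequality to pass from $\norm{\cdot}_p$ to $F_p$ therefore suffices. The gap is inverse-polynomial in $k$ and $N$ jointly rather than in $N$ alone, which is harmless.

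\textbf{Where the proof breaks.} The genuine gap is Step~2 of your upper bound when $p>2$, i.e.\ $q<2$. For unit $u\in\ell_q^d$ your certificate equals $R\,J_q(u^\ast)^\top u$, and $1-J_q(u^\ast)^\top u$ is not $O(\theta^2)$ near coordinate hyperplanes. Take $d=2$ and $z^\ast=(1,s)$ with small $s>0$, so $u^\ast\propto(1,s^{p-1})$. The direction $u=(1,0)$ is at angle $\theta\approx s^{p-1}$ from $u^\ast$. Yet it certifies only $u^\top z^\ast/\norm{u}_q=1$, while $R=(1+s^p)^{1/p}$. The relative loss is about $s^p/p=\theta^q/p\gg\theta^2$.

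This is not mere slack in the certificate. If the vertex of $Z$ exposed by $(1,0)$ lies near $(1,0)$, which is compatible with $z^\ast$ being the farthest vertex, the greedy subset for that direction really loses this much. So Step~2 only supports spacing $\theta\approx\varepsilon^{1/q}$, i.e.\ $\varepsilon^{-(d-1)/q}$ directions, which is worse than $\varepsilon^{-(d-1)/2}$.

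Neither remedy you suggest helps. Comparability of $\norm{\cdot}_q$ with $\norm{\cdot}_2$ controls sizes, not curvature. Enumerating supports cannot enforce bounded aspect, since here $s\neq0$. Coordinates of $z^\ast$ can be nonzero but arbitrarily small, and the Hessian of $\norm{\cdot}_q$ at $u^\ast$ grows like $\abs{u^\ast_j}^{q-2}$.

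\textbf{The missing idea.} Put the net on whichever side is $2$-uniformly smooth. With $r=\max\set{p,q}\ge2$, the Ball--Carlen--Lieb inequality (\cref{lem:quadratic-support-loss}) gives $1-J_r(u)^\top v\le\frac{r-1}{2}\norm{u-v}_r^2$ for all unit $u,v\in\ell_r^d$, with no aspect condition.
\begin{itemize}
\item For $p\ge2$, the paper nets the primal $\ell_p$ sphere around $v^\ast=z^\ast/R$ and searches in the direction $y=J_p(u)$. This gives $y^\top v^\ast\ge1-\frac{p-1}{2}\delta^2$. Seen from the dual side, the net is compressed toward the coordinate hyperplanes by $t\mapsto\sign(t)\abs{t}^{p-1}$, which is exactly what the degenerate curvature requires.
\item For $p<2$, your dual net is already fine, by the same inequality with $r=q$, since $J_q(u^\ast)=v^\ast$.
\end{itemize}
The rational implementation also needs some care. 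The paper grids in the variables $\sign(v_j)\abs{v_j}^{1/e}$, where $r=c/e$, so each direction needs only one fixed-degree root. It then absorbs a rationalization error $\eta=\varepsilon/8$ into the support bound.
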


Among algorithms whose dependence on $L$ has fixed polynomial degree, the upper bound uses an $O(d)$ exponent of $1/\varepsilon$, while ETH rules out every $o(d)$ exponent.  The real-arithmetic core of the upper bound is implicit in classical convex-body approximation; our formulation makes the returned subset and the rational Turing cost explicit and couples it to the generator-side lower bound.

\section{Model and Preliminaries}

Fix a rational constant $p=a/b\in(1,\infty)$ in lowest terms; $p$ is not part of the input.  Let
\begin{align*}
F_p(z)
&\coloneqq\norm{z}_p^p
=\sum_{j=1}^d\abs{z_j}^p,
&
J_p(z)_j
&\coloneqq\sign(z_j)\abs{z_j}^{p-1}.
\end{align*}
Then $\grad F_p(z)=pJ_p(z)$.  Maximizing $F_p$ and maximizing the norm have the same solutions.

For $A\in\Q^{d\times n}$, write
\begin{align*}
\OPT_p(A)
&\coloneqq\max_{x\in\set{0,1}^n}F_p(Ax).
\end{align*}
Thus \cref{prob:lvs} asks whether $\OPT_p(A)\geq B$.

\begin{proposition}[Exact Turing formulation]
\label{prop:turing-language}
Under the standard binary encoding of integers and reduced rational numerator--denominator pairs, define
\begin{align*}
\mathrm{LVS}_p
&\coloneqq
\set{
\left\langle d,n,A,B\right\rangle:
d,n\in\N,
A\in\Q^{d\times n},
B\in\Q_{>0},
\exists x\in\set{0,1}^n\quad
F_p(Ax)\geq B
}.
\end{align*}
Malformed encodings are excluded from the language.  For every fixed rational $p=a/b>1$, $\mathrm{LVS}_p$ is a decidable language in the ordinary Turing model.
\end{proposition}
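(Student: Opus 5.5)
We prove decidability by giving an explicit algorithm: parse the input, enumerate the finitely many Boolean vectors, and decide each comparison $F_p(Ax)\geq B$ exactly in rational arithmetic. Parsing is routine. One checks that $d,n$ are well-formed naturals, that $A$ consists of $dn$ reduced numerator--denominator pairs, and that $B$ is a reduced positive rational. Any malformed string is rejected, which is exactly how the language treats it. For a well-formed instance we loop over all $2^n$ vectors $x\in\set{0,1}^n$. For each one we compute $y=Ax\in\Q^d$ exactly. The algorithm accepts precisely when some $x$ satisfies $\sum_{j=1}^d\abs{y_j}^{a/b}\geq B$. Since $R_p(A)$ is attained at a cube vertex, this matches the definition of $\mathrm{LVS}_p$. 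The only nontrivial part is deciding the single comparison.

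For the comparison, write $t_j=\abs{y_j}^{a}\in\Q_{\ge 0}$, so that each $s_j\coloneqq t_j^{1/b}$ is the unique nonnegative real root of $X^b-t_j$. We must decide whether $S\coloneqq\sum_j s_j\geq B$.

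We proceed through the following steps.
\begin{enumerate}
\item Clear denominators and write $t_j=m_j/D^b$ with a common denominator. Then $S=D^{-1}\sum_j m_j^{1/b}$, which reduces the question to sums of $b$-th roots of nonnegative integers $m_j$.
\item Form the polynomial $P(X)=\prod_{\omega}\bigl(X-\sum_j \omega_j m_j^{1/b}\bigr)$, where $\omega$ ranges over $d$-tuples of $b$-th roots of unity. It has integer coefficients, since it is symmetric under the Galois action. Its degree is at most $b^d$, and its coefficients are computable, for example by iterated resultants $\mathrm{Res}_Y\bigl(P_{k-1}(X-Y),Y^b-m_k\bigr)$. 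Moreover $T\coloneqq DS$ is a real root of $P$.
\item Bound the root separation of $P$, for instance via Mahler's bound in terms of the degree and the coefficient height. This gives an explicit rational $\delta>0$ such that any nonzero value $T-DB$ satisfies $\abs{T-DB}\geq\delta$. In particular, if $T\ne DB$ then $DB$ is not a root, or we can test directly whether $P(DB)=0$.
\item Approximate each $m_j^{1/b}$ to within $\delta/(2d)$ by bisection with rational endpoints. Comparing $X^b$ with $m_j$ is exact integer arithmetic.
\item Decide the comparison. If $P(DB)\neq0$, the sign of $T-DB$ equals the sign of the approximation error-corrected sum, so it is determined. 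If $P(DB)=0$, the equality case requires care. We resolve it by approximating $T$ to within half the minimal distance between distinct roots of $P$. That distinguishes $T$ from the other roots and settles whether $T=DB$.
\end{enumerate}

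A simpler alternative for the equality case is to decide $S\ge B$ by testing the strict inequality $S>B$ together with equality. Equality can be tested by isolating $T$ among the real roots of $P$ using Sturm sequences on a rational interval certified by the approximation. Either way, everything runs in rational arithmetic and terminates.

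The main obstacle is exactly this exact comparison of a sum of algebraic numbers against a rational, and in particular the equality case $S=B$, since approximation alone never terminates there. I would handle it through the integer annihilating polynomial $P$ and a root-separation bound, as above. This yields a finite-precision certificate and hence decidability. The efficiency refinement, $b^{O(d)}\poly(L)$ per comparison, is deferred to \cref{prop:turing-xp}; decidability needs only finiteness of every step.
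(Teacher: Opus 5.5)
Your proof is correct in outline, but it takes a different route from the paper. The paper does no enumeration or root isolation. After clearing denominators, it writes membership as an existential sentence over the reals with integer coefficients: Boolean constraints $x_i(1-x_i)=0$, linear equations for $z=Ax$, and auxiliary variables with $r_j\geq0$, $r_j^2=z_j^2$, $y_j\geq0$, $y_j^b=r_j^a$, and $\sum_j y_j\geq B$. It then appeals to real quantifier elimination. That proof is shorter and pushes all algebraic-number bookkeeping into a general decision procedure. You instead give an explicit, self-contained algorithm: brute force over $\{0,1\}^n$, plus exact sign determination for a sum of $b$th roots against a rational. This is essentially the comparison machinery the paper develops separately in \cref{prop:turing-xp}, so your argument already carries the ingredients of the quantitative refinement.

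Step~3 needs repair. A root-separation bound for $P$ controls $\abs{T-DB}$ only when $DB$ is itself a root of $P$. In the case $P(DB)\neq0$, which is exactly where Step~5 uses $\delta$, it gives nothing. In fact no $\delta$ depending only on $P$ can work, since rationals $DB\neq T$ come arbitrarily close to $T$ as $B$ varies. Your sentence ``if $T\neq DB$ then $DB$ is not a root'' is also false. Take $b=2$, $m_1=1$, $m_2=4$: the roots of $P$ are $\pm1\pm2$, so $DB=1$ is a root while $T=3$.

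For decidability alone the fix is immediate. $P(DB)\neq0$ certifies $T\neq DB$, so refining the bisection until the enclosure of $T$ excludes $DB$ must terminate.

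A cleaner fix is the one \cref{prop:turing-xp} uses. Adjoin $-DB$ as one more degree-one summand and form an integer annihilator $R$ of $T-DB$. Then use the elementary bound that every nonzero root of $R$ has modulus at least $\abs{c_m}/\sum_{i\geq m}\abs{c_i}$, where $c_m$ is the lowest nonzero coefficient. This settles sign and equality in one pass and avoids your case split. It also sidesteps the squarefreeness that Mahler's separation bound requires, which matters because $P$ can have repeated roots, e.g.\ whenever $b\geq2$ and some $m_j=0$.
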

\begin{proof}
For a valid input, introduce real variables $x_i,z_j,r_j,y_j$.  After clearing rational denominators, membership is equivalent to the existence of a real solution to
\begin{align*}
x_i(1-x_i)
&=0
&&\text{for }i\in[n],
&
z_j
&=\sum_{i=1}^n A_{ji}x_i
&&\text{for }j\in[d],
\\
r_j
&\geq0,
&
r_j^2
&=z_j^2
&&\text{for }j\in[d],
\\
y_j
&\geq0,
&
y_j^b
&=r_j^a
&&\text{for }j\in[d],
&
\sum_{j=1}^d y_j
&\geq B.
\end{align*}
The first line forces a Boolean vector and its rational subset sum, the second forces $r_j=\abs{z_j}$, and the third forces $y_j=\abs{z_j}^{a/b}$.

The displayed system is an existential formula over the reals with integer coefficients and is therefore decidable by real quantifier elimination \cite[pp.~465--521]{bpr03}.
\end{proof}

\begin{proposition}[Exact comparison and Turing-XP]
\label{prop:turing-xp}
Let $z,z'\in\Q^d$ have total encoding length at most $L'$.  The sign of $F_p(z)-F_p(z')$, including equality, can be determined in $b^{O(d)}\poly(L')$ bit operations.  Consequently, $\mathrm{LVS}_p$ can be decided and a maximizing subset can be returned in
\begin{align*}
n^{O(d)}b^{O(d)}\poly(L)
\end{align*}
bit operations.
\end{proposition}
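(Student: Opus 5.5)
The plan is to reduce the comparison of $F_p(z)$ and $F_p(z')$ to deciding the sign of a sum of $2d$ algebraic numbers. Each of these numbers has the form $\pm c_j^{1/b}$, where $c_j=\abs{z_j}^a$ or $\abs{z'_j}^a$ is a rational number. Then we apply a root-separation bound in a field of controlled degree.

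\textbf{Step 1 (clearing denominators).} Multiply every coordinate by a common denominator $D$. Since $F_p(Dz)=D^pF_p(z)$, the sign is unchanged. So we may assume $z,z'\in\mathbb{Z}^d$ with entries of bit length $\poly(L')$. Write $\abs{z_j}^{a/b}=(\abs{z_j}^a)^{1/b}$. Each radicand $m_j=\abs{z_j}^a$ is an integer of bit length $a\cdot\poly(L')$, which is $\poly(L')$ because $p$ is fixed.

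\textbf{Step 2 (the algebraic quantity).} Let $S=\sum_j m_j^{1/b}-\sum_j m_j'^{1/b}$. Then $S$ lies in the field $K=\Q(m_1^{1/b},\dots,m_{2d}^{1/b})$, of degree at most $b^{2d}$. Define the norm-like product
\[
P=\prod_{\sigma}\sigma(S),
\]
where $\sigma$ ranges over all $b^{2d}$ choices of $b$-th roots of unity $\omega_j$ applied to the respective radicals. Then $P$ is symmetric under each Galois action, so $P$ is a rational integer. If $S\neq0$, then $\abs{P}\geq1$, because $P=0$ only when some conjugate vanishes, and we may instead take the minimal polynomial of $S$ over $\Q$. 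This gives the lower bound
\[
\abs{S}\geq \frac{1}{\prod_{\sigma\neq\mathrm{id}}\abs{\sigma(S)}}\geq (2d\,M)^{-b^{2d}},
\]
where $M=\max_j m_j^{1/b}$. Hence $S\neq0$ implies $\abs{S}\geq 2^{-b^{O(d)}\poly(L')}$. The same bound follows from the standard Liouville/Mahler root-separation bound applied to the minimal polynomial of $S$, which has degree at most $b^{2d}$ and height at most $2^{b^{O(d)}\poly(L')}$.

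\textbf{Step 3 (deciding the sign).} Compute each $m_j^{1/b}$ to $t=b^{O(d)}\poly(L')$ bits of absolute precision by Newton iteration or bisection on integers. Each computation costs $\poly(t,L')$ bit operations. Sum the approximations to get $S$ within error $2d\cdot2^{-t}$. If the approximation exceeds the separation bound in absolute value, its sign is the sign of $S$. Otherwise $S=0$, which decides equality.

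\textbf{Step 4 (from comparison to the XP algorithm).} The key observation is that a maximizer of the convex function $F_p$ over $Z(A)$ is attained at a vertex, and every vertex is $Ax$ for an $x\in\set{0,1}^n$ induced by a generic direction $u$, via $x_i=[a_i^\top u>0]$. Enumerate the cells of the central hyperplane arrangement $\set{a_i^\top u=0}$. This yields at most $O(n^{d-1})$ sign vectors, each computable in $\poly(L)$ time by the enumeration underlying \cref{thm:known-xp}. For each candidate $x$, the vector $z=Ax$ has encoding length $\poly(L)$. Keep a running best candidate using Step 3's comparator, and compare the final value to $B$; for this last test, take $z'$ to be the rational point $(B,0,\dots,0)$, after noting that $F_p((B^{1/p},0,\dots))=B$. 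More simply, compare $\sum_j m_j^{1/b}$ against $B$ directly with the same separation argument. The total cost is $n^{O(d)}b^{O(d)}\poly(L)$.

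The main obstacle is Step 2: getting a separation bound whose dependence is $b^{O(d)}$ in the degree and only $\poly(L')$ in the height. I would rely on the standard bound that a nonzero algebraic number of degree $D$ and height $H$ has absolute value at least $(D+1)H)^{-D}$ roughly. Care is needed that the conjugate product really gives an integer of the stated size.
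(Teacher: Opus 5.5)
Your proof is correct. It follows the paper's overall architecture: a zero-separation bound for a signed sum of at most $2d$ radicals, then certified numerical approximation to that precision, then comparison over the $n^{O(d)}$ arrangement cells. You obtain the separation bound differently, though.

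The paper does not clear denominators. It treats each term as the nonnegative root of $s_j^aX^b-r_j^a$ and builds an explicit annihilating polynomial $R\in\mathbb Z[X]$ for the signed sum by iterated Sylvester resultants. It bounds the degree of $R$ by $b^{2d}$ and its height via the Sylvester determinant, and then computes an instance-specific Cauchy-type lower bound $\Delta$ for nonzero roots from the coefficients.

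You instead rescale so that every term is a $b$-th root of an integer. Then $S$ is an algebraic integer of degree at most $b^{2d}$, every conjugate has modulus at most $2dM$, and the nonzero integer norm gives the a priori bound $\abs{S}\geq(2dM)^{-b^{2d}}$. Your route needs no symbolic computation: the working precision is fixed in advance from $b$, $d$ and the bit length, and you never track coefficient growth through resultants. The paper's route yields an explicit certificate and handles non-monic rational data and the threshold without rescaling, since $-B$ is simply a degree-one term. Both give $b^{O(d)}\poly(L')$.

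Two expository slips need fixing.

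First, the implication ``$S\neq0\Rightarrow\abs{P}\geq1$'' is false for your formal product over all $b^{2d}$ root-of-unity choices, because formal conjugates can vanish when $S$ does not. For example, with $p=3/2$, $z=(1,0,0)$ and $z'=(144,81,100)$, you get $S=1-12^3-9^3-10^3\neq0$. Flipping the sign of the radical $\sqrt{144^3}$ gives $1+12^3-9^3-10^3=0$, so $P=0$. Your fallback to the minimal polynomial (nonzero norm, conjugates bounded by $2dM$) is the actual argument, and it makes $P$ unnecessary.

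Second, the point $(B,0,\dots,0)$ has value $B^p$, not $B$. After scaling by $D$, the threshold becomes $D^pB=(D^aB^b)^{1/b}$, which is irrational in general. Multiply through by the denominator of $B$; then it is one more $b$-th root of an integer, and the same bound applies with $d+1$ terms.
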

\begin{proof}
For each nonzero coordinate, express $\abs{z_j}=r_j/s_j$ in lowest terms.  The number
\begin{align*}
\alpha_j
&\coloneqq\abs{z_j}^{a/b}
\end{align*}
is the distinguished nonnegative root of
\begin{align*}
s_j^aX^b-r_j^a.
\end{align*}
The polynomial has degree at most $b$ and coefficient bit length $O(aL')$; the same description applies to the terms from $z'$.  Give the at most $t\leq2d$ nonzero terms their signs in $F_p(z)-F_p(z')$, replacing $X$ by $-X$ in the defining polynomial of a negative term.

Repeated Sylvester resultants construct a nonzero polynomial $R\in\mathbb Z[X]$ that annihilates their signed sum.  Namely, if $P$ and $Q$ annihilate two partial sums, then
\begin{align*}
\operatorname{Res}_Y\bigl(P(Y),Q(X-Y)\bigr)
\end{align*}
annihilates their sum.  The Sylvester determinant bound shows inductively that $R$ has degree $D\leq b^t$ and coefficient bit length $b^{O(t)}\poly(L')$, and it can be constructed within the same bit bound.

Factor $R$ as
\begin{align*}
R(X)
&=X^m\left(c_m+c_{m+1}X+\cdots+c_DX^{D-m}\right),
&
c_m
&\neq0.
\end{align*}
Unless the parenthesized polynomial is constant, every nonzero root $\zeta$ of $R$ satisfies the rational lower bound
\begin{align*}
\abs{\zeta}
&\geq
\Delta
\coloneqq
\frac{\abs{c_m}}{\abs{c_m}+\sum_{i>m}\abs{c_i}}.
\end{align*}
This follows by applying the triangle inequality to the parenthesized equation when $\abs{\zeta}<1$; the case $\abs{\zeta}\geq1$ is immediate.  The encoding length of $\Delta$ is $b^{O(d)}\poly(L')$.  Each $\alpha_j$ is the unique nonnegative root of a fixed-degree binomial, so rational bisection encloses every signed term tightly enough that the interval sum has width less than $\Delta/2$.  If the sum interval avoids zero, its side gives the sign.  If it contains zero, the represented sum has magnitude below $\Delta$ and must equal zero.  If $R$ is a monomial, the represented sum is zero directly.  This proves exact comparison in $b^{O(d)}\poly(L')$ time.

For the algorithmic consequence, delete zero columns and enumerate the full-dimensional cells of the rational central arrangement $a_i^\top y=0$ used in \cref{thm:known-xp}.  A rational representative of each cell exposes the subset sum
\begin{align*}
z(y)
&=\sum_{i:y^\top a_i>0}a_i.
\end{align*}
Conversely, every zonotope vertex is exposed by an interior point of its full-dimensional normal cone and therefore occurs among these candidates.  There are $n^{O(d)}$ cells, they can be enumerated by exact rational arithmetic in $n^{O(d)}\poly(L)$ time, and every candidate has encoding length $\poly(L)$.  Applying the comparison procedure to the candidates retains a maximizer.  The same procedure compares its value with $B$ by including $-B$ as one additional rational algebraic term of degree one, which gives the claimed exact Turing decision bound.
\end{proof}

The zonotope generated by $A$ is
\begin{align*}
Z(A)
&\coloneqq A[0,1]^n
=\sum_{i=1}^n[0,a_i].
\end{align*}
Convexity implies that a maximum of $F_p$ over $Z(A)$ is attained at a vertex, hence at a subset sum.  For a direction $y\in\R^d$, define
\begin{align*}
S(y)
&\coloneqq\set{i:y^\top a_i\geq0},
&
z(y)
&\coloneqq\sum_{i\in S(y)}a_i,
&
h_A(y)
&\coloneqq y^\top z(y)=\sum_i\max\set{0,y^\top a_i}.
\end{align*}
Ties may be included or excluded without changing the support value.

We use the standard parameterized assumptions FPT$\neq$W[1] and ETH.  The source problem is $k$-Multicolored Clique: the input graph has $k$ color classes of size at most $N$, and the task is to select one vertex from each class so that all selected vertices are pairwise adjacent.

\begin{theorem}[Multicolored Clique lower bounds \cite{pie03,chk+06}]
\label{thm:clique-lower}
$k$-Multicolored Clique is W[1]-hard in $k$.  Under ETH it has no $f(k)N^{o(k)}$ algorithm.
\end{theorem}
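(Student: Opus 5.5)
The statement is a standard combination of two known facts. I would not reprove either from scratch. Instead I would reduce both claims to the corresponding facts about ordinary $k$-Clique, using a parameter-preserving reduction whose size blow-up is linear.

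\textbf{Step 1: the reduction.} Given a graph $G=(V,E)$ and $k$, build $G'$ with color classes $V_1,\ldots,V_k$, where each $V_i=\set{(v,i):v\in V}$. Join $(u,i)$ and $(v,j)$ exactly when $i\neq j$ and $uv\in E$. Then $G'$ has $k$ classes of size $N=\abs{V}$, and the construction takes polynomial time.

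A $k$-clique $\set{v_1,\ldots,v_k}$ of $G$ gives the multicolored clique $\set{(v_i,i)}$. Conversely, let $\set{(v_i,i)}$ be a multicolored clique of $G'$. Each pair $(v_i,i),(v_j,j)$ with $i\neq j$ is adjacent, so $v_iv_j\in E$. Since $G$ has no loops, this forces $v_i\neq v_j$. Hence the $v_i$ are $k$ distinct, pairwise adjacent vertices of $G$. The parameter is unchanged, so this is both an FPT reduction and a linear-parameter reduction.

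\textbf{Step 2: W[1]-hardness.} This transfers directly from the W[1]-completeness of $k$-Clique (Downey--Fellows), which is the form used in \cite{pie03}.

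\textbf{Step 3: the ETH bound.} Suppose Multicolored Clique had an $f(k)N^{o(k)}$ algorithm. Composing it with Step 1 would give an $f(k)\abs{V}^{o(k)}$ algorithm for $k$-Clique. This contradicts the lower bound of \cite{chk+06}, provided that lower bound holds for every computable $f$.

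\textbf{Main obstacle.} The real content is the ETH lower bound for $k$-Clique itself. If I had to justify it, I would argue as follows.

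1. Start from 3-SAT on $n$ variables. By the sparsification lemma, we may assume it has $O(n)$ clauses.
2. Partition the variables into $k$ blocks of size about $n/k$.
3. For each block, create one vertex per assignment to that block, giving $2^{n/k}$ vertices per color class.
4. Join two vertices in different classes when the union of their partial assignments violates no clause. Because clauses have at most three variables, each clause touches at most three blocks. A clause touching three blocks would need a hyperedge, so I would first split variables so that every clause touches at most two blocks, or equivalently first reduce to a 2-CSP on $O(n)$ variables with constant domain.
5. With this in place, $k$-cliques correspond exactly to satisfying assignments, and $N=2^{O(n/k)}$.
6. An $f(k)N^{o(k)}$ algorithm then runs in $f(k)2^{o(n)}$ time. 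Choosing $k=k(n)$ growing slowly enough that $f(k)\le 2^{o(n)}$ yields a $2^{o(n)}$ algorithm for 3-SAT, contradicting ETH.

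The delicate points are this choice of a slowly growing $k$ against an arbitrary computable $f$, and keeping the $o(k)$ uniform. I would handle both exactly as in \cite{chk+06}.
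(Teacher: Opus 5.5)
Your proposal is correct. The paper gives no proof of this statement and simply imports it from \cite{pie03,chk+06}. Your color-copy reduction from ordinary $k$-Clique is the standard bridge from those Clique results to the multicolored form: it preserves the parameter, gives class size $N=\abs{V}$, and its polynomial cost is absorbed into $N^{o(k)}$ by replacing the exponent $g(k)$ with $\max\set{g(k),c}$. Your sketch of the underlying ETH bound, via a sparsified linear-size 2-CSP split into $k$ blocks, is also the standard argument, and it is appropriate to defer it to \cite{chk+06}.
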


For a binary CSP, let the variables be $1,\ldots,k$, let $D$ be a common finite domain, and let each of the $K$ constraints specify a binary relation on a pair of distinct variables.  The input explicitly lists the set of accepted tuples for each constraint.  Multicolored Clique is the special case with one variable per color class and one adjacency constraint per pair of colors.

\section{The Generator-Preserving Reduction}
\label{sec:hardness}

We prove the main theorem through a quantitative reduction from binary CSP.  A variable assignment is represented by a vertex of a rational planar zonogon.  For each accepted pair of labels, a positive cap generator has positive first-order score only at that pair and negative score at every other pair.  A common perturbation scale then makes the objective distinguish assignments by the number of satisfied constraints.  We first establish the sign rule that controls generator selection, then construct the labels and caps, assemble the CSP encoder, and finally specialize it to Multicolored Clique.

\begin{lemma}[Global self-consistency]
\label{lem:self-consistency}
Let $\Phi:\R^d\to\R$ be differentiable and strictly convex.  Suppose that $z=\sum_{i\in S}a_i$ globally maximizes $\Phi$ over all subset sums and that every $a_i$ is nonzero.  Then
\begin{align*}
i\in S
&\quad\Longleftrightarrow\quad
\grad\Phi(z)^\top a_i>0.
\end{align*}
In particular, if the generator list contains $g$ and $-g$, exactly one of them is selected at every global optimum.
\end{lemma}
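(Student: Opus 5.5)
The plan is a single-generator exchange argument combined with the strict gradient inequality for strictly convex differentiable functions. For such $\Phi$ and any $u\neq w$,
\begin{align*}
\Phi(w)
&>\Phi(u)+\grad\Phi(u)^\top(w-u).
\end{align*}
I will apply this at the base point $u=z$, to the neighbouring subset sum obtained by toggling one generator. The forward direction of the rule comes directly from this, and the reverse direction needs the same inequality with the base point moved to the neighbour.

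First, let $i\in S$ and set $w=z-a_i$. This is a subset sum, and $w\neq z$ because $a_i\neq0$. Global optimality gives $\Phi(w)\le\Phi(z)$. The strict inequality then gives
\begin{align*}
\Phi(z)\ \ge\ \Phi(w)\ >\ \Phi(z)-\grad\Phi(z)^\top a_i,
\end{align*}
so $\grad\Phi(z)^\top a_i>0$.

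Next, let $i\notin S$ and set $w=z+a_i$. The same argument yields $\Phi(z)>\Phi(z)+\grad\Phi(z)^\top a_i$, so $\grad\Phi(z)^\top a_i<0$. In particular the score is strictly nonzero for every generator. Together with the previous step, the sign of $\grad\Phi(z)^\top a_i$ exactly determines membership in $S$, which proves the equivalence in both directions.

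For the consequence, suppose $g$ and $-g$ both appear in the list, with $g\neq0$. Their scores $\grad\Phi(z)^\top g$ and $-\grad\Phi(z)^\top g$ are negatives of each other. By the second step neither score is zero, so exactly one of them is positive, and by the rule exactly one of the two generators is selected.

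I expect the main obstacle to be bookkeeping rather than analysis. The ``global optimum'' must be understood over subset sums of the given indexed list, so toggling index $i$ is always legal even when generators repeat. Strictness of the tangent inequality for strictly convex differentiable functions is standard: it follows from convexity along the segment together with strict convexity at the midpoint. Strictness is precisely what excludes a zero score, and so what makes the ``exactly one'' conclusion hold.
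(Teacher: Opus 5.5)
Your proof is correct and is essentially the paper's argument: toggle one generator, apply the strict tangent inequality at base point $z$ to the subset sums $z\mp a_i$, conclude a strictly positive score for selected indices and a strictly negative score for unselected ones, and get the $\pm g$ consequence from opposite nonzero signs. One small inaccuracy in your plan: you say the reverse direction needs the inequality with the base point moved to the neighbour, but your actual argument (like the paper's) uses base point $z$ in both cases, and that is all that is needed.
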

\begin{proof}
If $i\in S$, global optimality gives $\Phi(z)\geq\Phi(z-a_i)$.  Strict convexity and $a_i\neq0$ give
\begin{align*}
\Phi(z-a_i)
&>\Phi(z)-\grad\Phi(z)^\top a_i,
\end{align*}
so $\grad\Phi(z)^\top a_i>0$.  If $i\notin S$, then $\Phi(z)\geq\Phi(z+a_i)$ and
\begin{align*}
\Phi(z+a_i)
&>\Phi(z)+\grad\Phi(z)^\top a_i,
\end{align*}
so $\grad\Phi(z)^\top a_i<0$.  The scores of $g$ and $-g$ have opposite signs.
\end{proof}

The reduction next uses a planar label set with two properties: every label is a vertex of a rational zonogon, and the norming functional of one label loses an inverse-polynomial amount on every other label.  We work on a compact arc away from the coordinate axes so that rational approximation and second-order perturbation are uniform for every fixed $p>1$.

\begin{lemma}[Rational curved labels]
\label{lem:rational-labels}
For every fixed rational $p\in(1,\infty)$ there is a rational constant $\underline\mu_p>0$ with the following property.  Given $m\geq2$ and a positive rational $\eta$, one can construct in polynomial time a rational number
\begin{align*}
0<\Gamma
&\leq\min\set{\frac{\underline\mu_p}{1024(m-1)^2},\frac{1}{100}},
\end{align*}
a centrally symmetric set $V\coloneqq\set{a_1,\ldots,a_m,-a_1,\ldots,-a_m}\subset\Q^2$, and nonzero rational vectors $g_1,\ldots,g_m$ such that
\begin{itemize}
\item every coordinate of every label has absolute value at least $3/16$;
\item $\abs{F_p(v)-1}\leq\eta$ for every $v\in V$;
\item $J_p(v)^\top w\leq1-8\Gamma$ for distinct $v,w\in V$; and
\item $\conv(V)=\sum_{j=1}^m[-g_j,g_j]$.
\end{itemize}
The total encoding length is polynomial in $m$ and the encoding length of $\eta$.
\end{lemma}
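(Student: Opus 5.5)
The plan is to place $m$ labels on a short, fixed arc of the unit $\ell_p$ sphere in the open first quadrant. Perturbing them to rational points gives a set $V$, and I will read off the zonogon generators from the edges of $\conv(V)$. The structural fact I will use is elementary. If $a_1,\ldots,a_m,-a_1,\ldots,-a_m$ are in strictly convex position and in this cyclic order, their convex hull is a centrally symmetric $2m$-gon whose opposite edges are parallel and of equal length. Hence
\begin{align*}
\conv(V)=\sum_{j=1}^m[-g_j,g_j],
\qquad
g_j\coloneqq\tfrac12(a_{j+1}-a_j),
\qquad
a_{m+1}\coloneqq-a_1 .
\end{align*}
This follows by summing edge vectors: the Minkowski sum of the $m$ segments is a zonogon whose edges, taken in angular order, are exactly $\pm2g_j$, and whose vertex sequence starts at $a_1$. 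These $g_j$ are rational and nonzero once the $a_j$ are rational and distinct. So the fourth bullet reduces to checking strict convex position and the cyclic order.

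\textbf{Ideal points and the curvature constant.} Let $c_p\coloneqq2^{-1/p}$, the common coordinate of the diagonal point of the unit sphere. I fix a compact arc $\mathcal A=\set{(x,y):x\in[x_-,x_+],\ y=(1-x^p)^{1/p}}$, with rational $x_\pm$ at fixed distance on either side of $c_p$, chosen so that both coordinates stay in $[1/4,1]$ along $\mathcal A$. On the region where both coordinates lie in $[1/4,1]$, the Hessian of $F_p$ is diagonal with entries $p(p-1)\abs{z_j}^{p-2}$, bounded below by a positive constant depending only on $p$. Taylor's theorem therefore gives a rational $\mu_p>0$ such that for $v,w$ on the unit sphere in this region,
\begin{align*}
J_p(v)^\top w\leq1-\mu_p\norm{v-w}_2^2 .
\end{align*}
I choose the ideal labels $a_j^\ast$ with equally spaced first coordinates on $[x_-,x_+]$, so consecutive labels are at Euclidean distance $\gtrsim(x_+-x_-)/(m-1)$. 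Shrinking $\mu_p$ by a constant to absorb the spacing constant yields $J_p(a_i^\ast)^\top a_j^\ast\leq1-16\Gamma$ for $i\neq j$. The resulting $\Gamma$ is at most the stated bound, and I take $\underline\mu_p$ to be a rational lower bound for this curvature constant. For pairs $v,-w$ with $v,w$ in the first quadrant, $J_p(v)^\top(-w)<0$, which is trivially at most $1-8\Gamma$. Central symmetry handles the remaining sign patterns.

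\textbf{Rationalization.} I set $a_j\coloneqq(x_j,\tilde y_j)$, where $\tilde y_j$ is a rational approximation of $(1-x_j^p)^{1/p}$ obtained by bisection on the monotone map $y\mapsto y^p$. The bisection tests exact inequalities of the form $y^a\gtrless(1-x^{a/b})^{\ldots}$; to avoid nested radicals I compare $y^p+x^p$ with $1$ via the exact-comparison routine of \cref{prop:turing-xp}. I run it to precision $\tau\coloneqq\min\set{\eta,\Gamma}^{2}/C_p$ for a large constant $C_p$. Since $F_p$ and $J_p$ are Lipschitz on the region, and $\norm{J_p}$ is bounded, three consequences follow.
\begin{itemize}
\item $\abs{F_p(a_j)-1}\leq\eta$.
\item The coordinate bound $3/16$ survives, since the ideal coordinates are at least $1/4$.
\item $J_p(a_i)^\top a_j\leq1-16\Gamma+O(\tau)\leq1-8\Gamma$.
\end{itemize}
Every number has bit length $O(\log m+\log(1/\eta)+\log(1/\Gamma))$, which is polynomial in $m$ and the encoding length of $\eta$.

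\textbf{Main obstacle: convex position after perturbation.} For three consecutive ideal labels on the strictly convex arc, the middle one lies outside the chord of its neighbours by $\Omega(\mu_p/(m-1)^2)=\Omega(\Gamma)$. A perturbation of size $\tau\ll\Gamma$ therefore preserves strict convexity at every interior vertex. At the junctions $a_m\to-a_1$ and $-a_m\to a_1$, the turning angle is bounded below by a constant, since the arc subtends angle well below $\pi$. The cyclic order is preserved because the angular separation of consecutive labels is $\Omega(1/m)\gg\tau$. The only care needed is tracking the constants so that $C_p$ depends on $p$ alone. With convex position established, the edge identity above gives the zonogon description.
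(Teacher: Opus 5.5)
Your proposal is correct, and its architecture matches the paper's. Both place the labels on a compact first-quadrant arc of the unit $\ell_p$ sphere with equally spaced rational first coordinates, use a Taylor bound to get an $\Omega_p((m-1)^{-2})$ gap in the support scores, bisect for the second coordinate, and read the generators off the centrally symmetric hull. Two sub-steps differ.

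\textbf{Curvature gap.} The paper bounds the second derivative of the one-variable function $t\mapsto J_p(u(s))^\top u(t)$ in the arc parameter. You expand $F_p$ along the chord and use its explicit diagonal Hessian, which gives the gap directly as a multiple of $\norm{v-w}_2^2$. The two arguments are equivalent in substance.

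\textbf{Convex position.} This is the real difference. You treat it as the main obstacle and settle it with a sagitta estimate at interior vertices, a separate turning-angle estimate at the junctions, and stability of the angular order. The paper gets it in one line from exposure. The exact functional $J_p(u(t_j))$ scores its own perturbed label at least $1$ minus the perturbation error. It scores every other point of $V$ at most $1-\gamma_0$ plus that error, where $\gamma_0=\Omega_p((m-1)^{-2})$ is the ideal gap. So each label is a strictly exposed vertex, by the same estimate that proves the third bullet. The paper then finds the cyclic order by an exact rational hull computation.

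Your route has the advantage of giving the generators in closed form with a known order. You could have obtained that from exposure as well. The normal cones of a convex polygon's vertices occur in the same cyclic order as the vertices, and the exposing normals $\pm J_p(a_j^\ast)$ rotate monotonically along the arc. This yields the order $a_1,\ldots,a_m,-a_1,\ldots,-a_m$ with no sagitta or junction analysis.

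Finally, driving the bisection with the exact comparison of \cref{prop:turing-xp} is a more explicit justification than the paper's appeal to certified bisection for the nested radical $(1-t_j^p)^{1/p}$.
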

\begin{proof}
Let $I\coloneqq[1/4,1/2]$ and define
\begin{align*}
h(t)
&\coloneqq(1-t^p)^{1/p},
&
u(t)
&\coloneqq(t,h(t)).
\end{align*}
Then $F_p(u(t))=1$, both coordinates are bounded away from zero, and
\begin{align*}
h'(t)
&=-t^{p-1}h(t)^{1-p},
&
h''(t)
&=-(p-1)t^{p-2}h(t)^{1-2p}.
\end{align*}
For fixed $s\in I$, put $H_s(t)\coloneqq J_p(u(s))^\top u(t)$.  We have $H_s(s)=1$, $H_s'(s)=0$, and $H_s''(t)\leq-\mu_p$ on $I$ for a constant $\mu_p>0$ depending only on $p$.  Since $p$ is fixed and rational, choose once and for all a positive rational lower bound $\underline\mu_p\leq\mu_p$.  Taylor's theorem gives
\begin{align*}
J_p(u(s))^\top u(t)
&\leq1-\frac{\underline\mu_p}{2}(s-t)^2.
\end{align*}

Take $t_j\coloneqq1/4+(j-1)/(4(m-1))$ and the labels $\set{u(t_j),-u(t_j):j\in[m]}$.  Distinct labels have support score at most $1-\gamma_0$, where
\begin{align*}
\gamma_0
&\coloneqq\frac{\underline\mu_p}{32(m-1)^2}.
\end{align*}
The antipodal cases follow from the oddness of $J_p$ and the positivity of the first-quadrant coordinates.

Let
\begin{align*}
T_\Gamma
&\coloneqq\min\set{\frac{\gamma_0}{32},\frac{1}{100}},
\end{align*}
and choose $\Gamma$ as the largest number of the form $2^{-r}$, $r\in\mathbb Z_{\geq0}$, that is at most $T_\Gamma$.  Then
\begin{align*}
\frac{T_\Gamma}{2}
&<\Gamma\leq T_\Gamma,
\end{align*}
so $\Gamma^{-1}=O_p(m^2)$ and $\Gamma$ has $O_p(\log m)$ encoding length.

Keep the first coordinate $t_j$ rational and replace the second coordinate by a dyadic approximation.  On a fixed compact neighborhood of the arc and its antipode, the maps $F_p$, $J_p$, and $(v,w)\mapsto J_p(v)^\top w$ are Lipschitz, with constants depending only on $p$.  Certified bisection for the fixed-degree root $(1-t_j^p)^{1/p}$ therefore finds polynomial-bit dyadics that preserve the coordinate lower bound, make the norm error at most $\eta$, and perturb every support score by at most $\gamma_0/4$.  Choosing $\Gamma\leq\gamma_0/32$ gives the first three conclusions.

The exact functional $J_p(u(t_j))$ continues to expose the corresponding rational perturbation, so every point of $V$ is a vertex of the centrally symmetric rational polygon $P\coloneqq\conv(V)$.  Every centrally symmetric convex polygon is a zonotope.  Compute the cyclic order by exact rational planar convex hull.  If one half is $v_1,\ldots,v_m$, its rational generators are
\begin{align*}
g_j
&\coloneqq\frac{v_{j+1}-v_j}{2} &&\text{for }j<m,
&
g_m
&\coloneqq\frac{-v_1-v_m}{2}.
\end{align*}
Thus $P=\sum_j[-g_j,g_j]$, and all operations after the dyadic approximation are exact rational operations.
\end{proof}

Fix two variable blocks and labels $r,s\in V$.  An anchored scaffold assignment has the form $q=(\ldots,r,\ldots,s,\ldots,1)$.  A cap aimed at $(r,s)$ has raw entries $r$ and $s$ in those blocks and a negative entry in the anchor coordinate.

\begin{lemma}[Normalized positive caps]
\label{lem:cap-normalization}
Assume the labels of \cref{lem:rational-labels} with $\eta\leq\Gamma/8$, and let $0<\xi\leq1/100$ be rational.  For every ordered target pair $(r,s)$ there is a rational $5$-sparse vector $c_{rs}$ such that
\begin{align*}
1-\xi
&\leq\grad F_p(q_{rs})^\top c_{rs}\leq1+\xi,
\end{align*}
where $q_{rs}$ is the anchored scaffold point containing $r,s$, while for every scaffold point $q_{xy}$ with $(x,y)\neq(r,s)$,
\begin{align*}
\grad F_p(q_{xy})^\top c_{rs}
&<-\frac45.
\end{align*}
Moreover $c_{rs}$ can be computed in polynomial time and satisfies $\norm{c_{rs}}_2\leq C_0$ for a rational bound $C_0\leq c_p\Gamma^{-1}$ that is independent of the final label precision.
\end{lemma}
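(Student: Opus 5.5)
The plan is to build $c_{rs}$ from the two labels themselves plus a negative anchor entry, and to normalize by an exactly computed rational offset. Concretely, put $\lambda\coloneqq 1/(2p\Gamma)$, which is rational since $p$ and $\Gamma$ are. Let $c_{rs}$ have entry $\lambda r$ in the first variable block, $\lambda s$ in the second block, $-\gamma$ in the anchor coordinate, and zeros elsewhere. Each label lives in $\Q^2$, so $c_{rs}$ is $5$-sparse and rational once $\gamma$ is rational.

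\textbf{Score formula.} Let $q_{xy}$ be any scaffold point; its other blocks are irrelevant because $c_{rs}$ vanishes there. Since $\grad F_p=pJ_p$ and $J_p(1)=1$ in the anchor coordinate,
\begin{align*}
\grad F_p(q_{xy})^\top c_{rs}
&=p\lambda\bigl(J_p(x)^\top r+J_p(y)^\top s\bigr)-p\gamma .
\end{align*}
\Cref{lem:rational-labels} controls each inner product. For $x\neq r$ we have $J_p(x)^\top r\leq1-8\Gamma$. For $x=r$ we have $J_p(r)^\top r=F_p(r)\in[1-\eta,1+\eta]$. The same bounds hold in the second block.

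\textbf{Choice of $\gamma$.} The target value is $p\lambda(F_p(r)+F_p(s))-p\gamma$, and $F_p(r)+F_p(s)$ is an algebraic number defined by fixed-degree binomials, as in \cref{prop:turing-xp}. I will compute a rational $\sigma$ with $\abs{\sigma-(F_p(r)+F_p(s))}\leq\xi/(p\lambda)=2\Gamma\xi$ by certified bisection, using $\poly(L,\log(1/\xi))$ bits. Then I set $\gamma\coloneqq\lambda\sigma-1/p$, so that $p\lambda\sigma-p\gamma=1$ exactly. The target score is therefore $1+p\lambda\bigl(F_p(r)+F_p(s)-\sigma\bigr)\in[1-\xi,1+\xi]$.

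\textbf{Bound for non-target pairs.} If $(x,y)\neq(r,s)$, at least one block mismatches. Each block term is then at most $1+\eta$, and at least one is at most $1-8\Gamma$. Hence
\begin{align*}
\grad F_p(q_{xy})^\top c_{rs}
&\leq 1+p\lambda\bigl(2-8\Gamma+\eta-\sigma\bigr)
\leq 1+p\lambda\bigl(-8\Gamma+3\eta+2\Gamma\xi\bigr).
\end{align*}
The second inequality uses $\sigma\geq 2-2\eta-2\Gamma\xi$. With $\eta\leq\Gamma/8$, $\xi\leq1/100$ and $p\lambda=1/(2\Gamma)$, the right side is at most $1-4+3/16+1/100<-4/5$.

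\textbf{Norm bound.} The labels have norm $O(1)$, so $\norm{c_{rs}}_2\leq\lambda\bigl(\norm r_2+\norm s_2\bigr)+\abs\gamma=O_p(\Gamma^{-1})$. This bound depends only on $p$ and $\Gamma$, not on the precision of the labels.

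\textbf{Main obstacle.} The delicate point is the normalization. The label error $\eta$ is only guaranteed to be at most $\Gamma/8$. Setting $\gamma$ from the nominal value $F_p\equiv1$ would therefore leave a target error of order $p\lambda\eta\approx1/16$, which can exceed $\xi$. This is why $\gamma$ must absorb a rational approximation of the actual irrational quantity $F_p(r)+F_p(s)$. I expect the remaining care to lie in checking three things: this approximation costs only polynomial time, it leaves $\gamma$ with bounded magnitude, and the slack $2\Gamma\xi$ it introduces is dominated by the gap $8\Gamma$ in the non-target bound.
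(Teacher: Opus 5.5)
Your proof is correct and builds the same cap as the paper: the labels $r,s$ sit in their two blocks, the anchor entry is negative, and a certified-bisection approximation of the algebraic target value is used for normalization. The only difference is how you normalize. The paper fixes the anchor offset $\theta=2-4\Gamma$ and rescales the whole vector by a rational $\lambda\approx 1/D$, whereas you fix the scale $1/(2p\Gamma)$ and absorb the approximation into the anchor entry $\gamma$; this makes your $O_p(\Gamma^{-1})$ norm bound and your wrong-label estimate $-3+3/16+1/100$ slightly more direct.
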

\begin{proof}
Set $\theta\coloneqq2-4\Gamma$ and let $b_{rs}$ contain $r$, $s$, and $-\theta$ in the two label blocks and anchor coordinate.  At the target,
\begin{align*}
\frac1p\grad F_p(q_{rs})^\top b_{rs}
&=J_p(r)^\top r+J_p(s)^\top s-\theta
\in\left[\frac{15\Gamma}{4},\frac{17\Gamma}{4}\right].
\end{align*}
If at least one label is wrong, one diagonal score is at most $1+\eta$ and one off-diagonal score is at most $1-8\Gamma$, so
\begin{align*}
\frac1p\grad F_p(q_{xy})^\top b_{rs}
&\leq\eta-4\Gamma
\leq-\frac{31\Gamma}{8}.
\end{align*}

Let $D\coloneqq\grad F_p(q_{rs})^\top b_{rs}$.  Because $D$ involves only five coordinates and $p$ is fixed rational, certified fixed-degree root bisection produces a rational interval around $D$ of relative width at most $\xi/4$.  Choose a positive rational $\lambda$ from the corresponding reciprocal interval so that $1-\xi\leq\lambda D\leq1+\xi$, and set $c_{rs}\coloneqq\lambda b_{rs}$.  Since $D\leq17p\Gamma/4$, the wrong-label bound is at most
\begin{align*}
-\frac{31p\Gamma}{8}\frac{1-\xi}{17p\Gamma/4}
&=-\frac{31(1-\xi)}{34}
<-\frac45.
\end{align*}
Finally $\lambda=O_p(\Gamma^{-1})$, while the raw cap has bounded coordinates.  A rational bound $C_0\leq c_p\Gamma^{-1}$ can therefore be fixed uniformly before the last dyadic label precision is chosen.
\end{proof}

We now assemble the scaffold and caps.  The quantitative statement will also supply the approximation gap used in \cref{sec:approximation}.

\begin{theorem}[Binary-CSP encoder]
\label{thm:csp-encoder}
Fix a rational $p\in(1,\infty)$.  Given an explicit binary CSP with $k$ variables, a common domain of size $2m$ with $m\geq2$, $K$ constraints on pairs of distinct variables, and $M$ listed accepted tuples, one can construct in polynomial time a rational generator matrix $A$ in dimension $d=2k+1$ with the following properties.

Every generator is at most $5$-sparse.  There are positive rationals $\tau,\eta,\xi$ such that every assignment satisfying $s$ constraints has a canonical subset whose value lies in
\begin{align}
k+1-k\eta+\tau s(1-\xi)-\frac{\tau}{20}
&\leq F_p(z)\leq
k+1+k\eta+\tau s(1+\xi)+\frac{\tau}{20}.
\label{eq:csp-interval}
\end{align}
Every global maximizer is the canonical subset of its encoded assignment, so it satisfies the same bounds with $s$ equal to the number of constraints satisfied by that assignment.  The intervals for consecutive values of $s$ are separated by more than $4\tau/5$.  All parameters and generators have polynomial encoding length.
\end{theorem}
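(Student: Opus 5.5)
We build the instance in $\R^{2k+1}$ with one planar block per CSP variable and a final anchor coordinate. For each variable block $v$ we take the zonogon generators $\pm g_1,\ldots,\pm g_m$ of \cref{lem:rational-labels}, placed in block $v$; each such pair, with both signs included, is $2$-sparse. By \cref{lem:self-consistency}, exactly one of $g_j,-g_j$ is selected at any optimum, so the block sum ranges over the vertices $\pm\sum_j \pm g_j$ of $\conv(V)$. Equivalently, a choice of signs that is consistent with a supporting direction yields a label in $V$. We encode the domain of size $2m$ by the $2m$ labels of $V$. For the anchor we add the generators $e_{2k+1}/2$ and $-e_{2k+1}/2$ twice; more simply, we include $e_{2k+1}$ with a large positive anchor so that the anchor coordinate is forced to $1$. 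The precise choice is to add $\pm\frac12 e_{2k+1}$ pairs, and to prove separately that the positive choice wins whenever the anchor stays near $1$. For every listed accepted tuple $(r,s)$ of a constraint on variables $(u,v)$, we add $\tau c_{rs}$, the normalized cap of \cref{lem:cap-normalization} placed in blocks $u,v$ and the anchor. These caps are $5$-sparse. We choose $\tau$ tiny, for instance $\tau\le \Gamma^2/(\poly_p(M) C_0^2)$, then choose $\xi=1/100$, and finally choose $\eta\le \tau/(40k)$ together with the label precision. This ordering is legitimate because $C_0$ does not depend on the label precision.

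The canonical subset of an assignment $\sigma$ takes the sign pattern in each block that produces the label $\sigma(v)$, the anchor value $1$, and exactly the caps whose target pair is satisfied by $\sigma$. The unperturbed point $q$ has $F_p(q)=\sum_v F_p(\sigma(v))+1\in[k+1-k\eta,k+1+k\eta]$. Adding the cap sum $w=\tau\sum c_{rs}$ over the $s$ satisfied caps, we use the expansion $F_p(q+w)=F_p(q)+\grad F_p(q)^\top w+O_p(\norm{w}^2)$. Since all coordinates of $q$ are at least $3/16$ in absolute value and $\norm{w}\le \tau M C_0$ is small, the Hessian is bounded near $q$. The cap of a pair is scored at the anchored scaffold point of that pair, and each cap touches only its own blocks, so the first-order term of each cap is the quantity controlled by \cref{lem:cap-normalization}, namely a value in $[1-\xi,1+\xi]$. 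The first-order term is therefore in $\tau s[1-\xi,1+\xi]$. The remainder is $O(\tau^2M^2C_0^2)\le \tau/20$ by the choice of $\tau$, which gives \eqref{eq:csp-interval}. Consecutive intervals are separated by $\tau(1-\xi)-2\tau/20-2k\eta-2\tau\xi\ge \tau(0.99-0.1-0.05-0.02)>4\tau/5$ for our choices.

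For the maximizer claim, we take a global maximizer $z$ and apply \cref{lem:self-consistency} with $\Phi=F_p$. Strict convexity fails at coordinate zeros when $p<2$ does not hold, so we instead argue with the exact inequalities on the segments, or we restrict attention to points whose coordinates are bounded away from zero. The main obstacle is showing that $z$ lies near a scaffold point. We first compare against the canonical subset of any assignment, which gives the lower bound $F_p(z)\ge k+1-\tau/10$. Every subset sum lies in $\sum_v \conv(V)_v\times[\text{anchor range}]+\tau\sum$ caps, and $F_p$ over the scaffold zonotope is maximized exactly at the label–anchor vertices with value $\approx k+1$. Moreover, by the $8\Gamma$ separation each block sum that is not a label loses $\Omega(\Gamma)\gg\tau MC_0$; the non-label block sums are the vertices of the scaffold zonotope other than the labels, such as partial sign flips, which are at distance $\Omega(\Gamma)$ from the unit ball in $\ell_p$. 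Hence every block of $z$ must be a label $\sigma(v)$, and the anchor must be $1$. Then, near $q_\sigma$, the gradient of $F_p$ at $z$ differs from the gradient at $q_\sigma$ by $O(\tau MC_0)$. The self-consistency sign rule then selects the caps with positive score, which by \cref{lem:cap-normalization} are exactly the satisfied caps (score $\ge 1-\xi-o(1)>0$, all others $<-4/5+o(1)$). Thus $z$ is the canonical subset of $\sigma$, and the bounds follow from the previous paragraph. The polynomial encoding length follows from those of \cref{lem:rational-labels,lem:cap-normalization} and from $\tau,\eta$ being dyadics of $O_p(\log(kMm))$ bits.
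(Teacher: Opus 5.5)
Your proposal has two genuine gaps: the anchor construction and the choice of $\xi$.

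\textbf{The anchor.} The construction you commit to, pairs $\pm\frac12 e_{2k+1}$, lets \cref{lem:self-consistency} force only $z_{2k+1}\approx\pm1$, and the negative sign is in fact optimal. The scaffold is centrally symmetric and $\abs{-1}^p=1$, so anchor $-1$ costs nothing in $F_p(q)$. But it flips the anchor contribution of every cap. At labels $(x,y)$ with anchor $-1$, the score of $\tau c_{rs}=\tau\lambda(r,s,-\theta)$ is $p\tau\lambda\bigl(J_p(x)^\top r+J_p(y)^\top s+\theta\bigr)$ with $\theta=2-4\Gamma$. This is positive for every label pair except $(x,y)=(-r,-s)$, and it is about $4p\tau\lambda\approx\tau/\Gamma$ at the target.

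Now take the anchor-$(-1)$ point carrying an assignment with the maximum number $s^*$ of satisfied constraints ($s^*\geq1$ once $M\geq1$), together with its matching caps. It exceeds every canonical value by a first-order margin of order $\tau s^*/\Gamma$. Global maximizers therefore have anchor $-1$, select almost every cap, and do not encode the CSP. Your sentence ``the positive choice wins whenever the anchor stays near $1$'' is circular, and ``the anchor must be $1$'' is never justified.

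The correct choice is the one-sided generator $e_{2k+1}$, which you mention only in passing. Even then you owe the argument that it is selected. For anchor-free configurations the mean-value bound gives $F_p(q+w)\le k(1+\eta)+O_p(\sqrt d\,\tau MC_0)$. This is below the value $k(1-\eta)+1$ of the anchored cap-free scaffold once $\tau$ and $k\eta$ are small enough.

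\textbf{The choice $\xi=1/100$.} The interval in \eqref{eq:csp-interval} has a width term $2\tau s\xi$. So the lower endpoint for $s$ minus the upper endpoint for $s-1$ is $\tau\bigl(1-(2s-1)\xi-\frac1{10}\bigr)-2k\eta$; your formula is only the case $s=2$. With $\xi=1/100$ this gap is below $4\tau/5$ for all $s\geq6$ and negative for $s\geq46$, while $s$ ranges up to $K$. Because $\xi$ is a free accuracy parameter in \cref{lem:cap-normalization}, choose $\xi=1/(100(K+1))$ as the paper does. Then $(2K-1)\xi<1/50$, and with your $\eta\le\tau/(40k)$ the separation exceeds $0.83\tau$.

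\textbf{A smaller point.} Your detour around strict convexity is unnecessary and partly wrong. For every $p>1$, $F_p$ is strictly convex and $C^1$ on all of $\R^d$, so \cref{lem:self-consistency} applies at any global maximizer. In each block, the selected member of every pair $\pm g_j$ is the one with positive score under the single functional $y$ given by that block of $\grad F_p(z)$. So the block sum is the vertex of $\sum_j[-g_j,g_j]=\conv(V)$ exposed by $y$, hence a label; this is the paper's argument.

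The vertices of this zonogon are exactly the labels. So non-label sign sums are not ``vertices other than the labels'', and an $\Omega(\Gamma)$ deficit for them does not follow from the $8\Gamma$ separation of distinct labels. With the one-sided anchor and $\xi=O(1/K)$, your cap-selection and Taylor steps agree with the paper's.
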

\begin{proof}
The construction in the proof of \cref{lem:rational-labels} first fixes $\Gamma$ from $p$ and $m$, independently of the later accuracy $\eta$.  Fix this $\Gamma$, retain the notation $T_\Gamma$ for its dyadic upper threshold, and fix the uniform bound $C_0\leq c_p\Gamma^{-1}$ from \cref{lem:cap-normalization}; the bound is valid for every later admissible rationalization of the labels.  Let
\begin{align*}
\xi
&\coloneqq\frac{1}{100(K+1)},
&
\overline M
&\coloneqq\max\set{1,M},
&
\overline k
&\coloneqq\max\set{1,k}.
\end{align*}
Let $G_p\geq1$ bound every coordinate of $\grad F_p$ on $[-3,3]$, and let $B_p\geq1$ bound the Euclidean operator norm of the diagonal Hessian of $F_p$ on $1/8\leq\abs{x_j}\leq3$.  These rational constants depend only on $p$.  Define
\begin{align*}
T_\tau
&\coloneqq\min\set{
1,
\frac{1}{16\overline M C_0},
\frac{1}{4dG_p\overline M C_0},
\frac{1}{4B_p\overline M C_0^2},
\frac{1}{10B_p\overline M^2 C_0^2}
},
\end{align*}
and choose $\tau$ as the largest number of the form $2^{-r}$, $r\in\mathbb Z_{\geq0}$, that is at most $T_\tau$.  Thus $T_\tau/2<\tau\leq T_\tau$.  Next define
\begin{align*}
T_\eta
&\coloneqq\min\set{\frac{\Gamma}{8},\frac{\tau}{100\overline k}},
\end{align*}
and choose $\eta$ as the largest dyadic no larger than $T_\eta$, so $T_\eta/2<\eta\leq T_\eta$.

Only now construct the labels at accuracy $\eta$ and normalize the caps.  Use one orthogonal two-dimensional block for each variable.  In every block insert both $g_j$ and $-g_j$ for every generator of the label zonogon, insert the anchor $e_{2k+1}$, and, for each listed accepted tuple of a constraint, insert the corresponding cap from \cref{lem:cap-normalization} scaled by $\tau$.  The dependency order is acyclic because $C_0$ is uniform over every later admissible rationalization.

Let $z\coloneqq q+w$ be a global optimum, where $q$ is its scaffold sum together with the possible anchor and $w$ is the sum of its selected scaled caps.  Applying \cref{lem:self-consistency} to every pair $g_j,-g_j$ shows that exactly one member of each pair is selected.  The chosen signs in one block are the signs of one linear functional, namely the corresponding block of $\grad F_p(z)$.  Their sum is therefore a vertex of the zonogon, hence one label in $V$.  Thus $q$ encodes one label per variable even before the selected caps are identified.

Put $H\coloneqq\tau M C_0$.  Then $\norm{w}_2\leq H\leq1/16$.  If the anchor is absent, the bounded-gradient estimate gives
\begin{align*}
F_p(q+w)
&\leq k(1+\eta)+\sqrt dG_pH
\leq k(1+\eta)+\frac14.
\end{align*}
An anchored scaffold vertex with no caps is feasible and has value at least $k(1-\eta)+1$.  Since $k\eta\leq\tau/100\leq1/100$, the latter is larger, so every global optimum contains the anchor.

Every coordinate of the anchored $q$ now has absolute value at least $3/16$ in a label block and equals one in the anchor coordinate.  Along the segment $q+tw$, all coordinates stay in the region $1/8\leq\abs{x_j}\leq3$.  For every unscaled cap $c$, the Hessian bound and the parameter choice give
\begin{align*}
\abs{(\grad F_p(q+w)-\grad F_p(q))^\top c}
&\leq B_pHC_0
\leq\frac14.
\end{align*}
The target score from \cref{lem:cap-normalization} remains positive, and every wrong-label score remains negative.  Another application of \cref{lem:self-consistency} shows that a cap is selected at the global optimum exactly when its accepted tuple agrees with the encoded assignment.  Hence precisely one cap of a constraint is selected if and only if that constraint is satisfied.

For an assignment satisfying $s$ constraints, take its scaffold representation, the anchor, and the $s$ matching caps.  Taylor's theorem on the same segment gives
\begin{align*}
\abs{F_p(q+w)-F_p(q)-\grad F_p(q)^\top w}
&\leq\frac{B_pH^2}{2}
\leq\frac{\tau}{20}.
\end{align*}
The scaffold value lies in $[k+1-k\eta,k+1+k\eta]$, and the first-order contribution of the matching caps lies in $[\tau s(1-\xi),\tau s(1+\xi)]$.  This proves \eqref{eq:csp-interval}.  The preceding selection argument shows that every global maximizer has exactly this canonical form for its encoded assignment, so the same interval applies to it.  The difference between the lower interval for $s$ and the upper interval for $s-1$ is at least
\begin{align*}
\tau\left(1-(2K-1)\xi-\frac1{10}\right)-2k\eta
&>\frac{4\tau}{5}.
\end{align*}

The dimension is $2k+1$.  Scaffold columns are $2$-sparse, the anchor is $1$-sparse, and a cap touches two planar blocks and the anchor.  The maximal-dyadic choices give
\begin{align*}
\Gamma^{-1}
&<2T_\Gamma^{-1},
&
\tau^{-1}
&<2T_\tau^{-1},
&
\eta^{-1}
&<2T_\eta^{-1}.
\end{align*}
Here $T_\Gamma^{-1}=O_p(m^2)$, $C_0=O_p(\Gamma^{-1})$, and the displayed definitions make $T_\tau^{-1}$ and $T_\eta^{-1}$ polynomial in $m,M,K,k$.  Hence every parameter, label, normalizer, and generator has polynomial encoding length.
\end{proof}

The encoder immediately yields \cref{thm:main-informal}.

\begin{theorem}[Generator-zonotope hardness]
\label{thm:exact-hardness}
For every fixed rational $p\in(1,\infty)$, \cref{prob:lvs} is W[1]-hard parameterized by $d$, even when every column of $A$ is $5$-sparse.  Under ETH, it admits no algorithm with running time $\rho_p(d)L^{o(d)}$ for any computable function $\rho_p$.
\end{theorem}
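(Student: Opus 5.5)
We will reduce $k$-Multicolored Clique to \cref{prob:lvs} through the Binary-CSP encoder of \cref{thm:csp-encoder}, and then read off both lower bounds from \cref{thm:clique-lower}.

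\textbf{The source CSP.} Given a multicolored clique instance with $k$ color classes of size at most $N$, we form the binary CSP with one variable per color class. We use a common domain of size $2m$, where $m\coloneqq\max\set{2,N}$. Each class is padded by unused labels that occur in no accepted tuple, so the domain becomes $V$ from \cref{lem:rational-labels}. For every pair of colors there is one constraint, and its accepted tuples are exactly the ordered adjacent pairs across the two classes. Thus $K=\binom{k}{2}$ and $M\leq 2|E|$. A padded label never satisfies a constraint, so an assignment satisfies all $K$ constraints if and only if it selects a multicolored clique.

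\textbf{Applying the encoder.} \cref{thm:csp-encoder} then gives, in polynomial time, a $5$-sparse rational matrix $A$ in dimension $d=2k+1$ with $L=\poly(N,k)$. It also gives parameters $\tau,\eta,\xi$. We set the threshold $B$ to the lower end of the interval \eqref{eq:csp-interval} for $s=K$:
\[
B\coloneqq k+1-k\eta+\tau K(1-\xi)-\tfrac{\tau}{20}.
\]
This $B$ is rational and of polynomial length.

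\textbf{Correctness.}
\begin{itemize}
\item If a clique exists, its canonical subset attains $F_p\geq B$.
\item Conversely, \cref{thm:csp-encoder} says every global maximizer is canonical for its encoded assignment, with value in the interval for the number $s$ of constraints it satisfies. If no clique exists, then $s\leq K-1$. The separation of more than $4\tau/5$ between consecutive intervals gives $\OPT_p(A)<B$.
\end{itemize}
Hence the map is a parameterized reduction with $d=2k+1$ depending only on $k$, and W[1]-hardness follows.

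\textbf{ETH bound.} Suppose an algorithm ran in time $\rho_p(d)L^{o(d)}$. Composing it with the polynomial reduction would give time
\[
\rho_p(2k+1)\,\poly(N,k)^{o(k)}=f(k)N^{o(k)}.
\]
This contradicts \cref{thm:clique-lower}.

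The step that needs the most care is this exponent bookkeeping. We have $L\leq (Nk)^{c}$ for a constant $c$ depending only on $p$, so $L^{o(d)}=(Nk)^{c\cdot o(2k+1)}$. The factor $k^{o(k)}$ is absorbed into $f(k)$, and $c\cdot o(2k+1)=o(k)$.

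The only other point to check is that the encoder's hypothesis $m\geq2$ holds and that padding keeps the domain common. Both are immediate from the choice $m=\max\set{2,N}$ and from the padded labels appearing in no accepted tuple.
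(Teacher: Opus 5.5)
Your proposal is correct and follows the paper's proof: pad the color classes with dummy labels to a common even domain of size at least four, encode Multicolored Clique with one variable per class and one adjacency constraint per pair via the binary-CSP encoder in dimension $2k+1$, and use the interval separation to set the threshold and carry over the W[1] and ETH bounds. The only difference is cosmetic: you put the threshold at the lower endpoint for $s=K$ rather than at the paper's midpoint. This is equally valid, because on a NO instance every maximizer's value is strictly below that endpoint.
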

\begin{proof}
Pad the color classes with isolated dummy labels to a common even size at least four, and apply \cref{thm:csp-encoder} with one variable per color class and one adjacency constraint per pair of classes.  The padding does not create a clique.  If the graph has a multicolored clique, some assignment satisfies all $K=\binom{k}{2}$ constraints.  Otherwise every assignment satisfies at most $K-1$.  The objective intervals in \cref{thm:csp-encoder} are separated by more than $4\tau/5$, so the midpoint between the lower endpoint for $s=K$ and the upper endpoint for $s=K-1$ is a valid rational threshold.

The output dimension is $2k+1$, and its bit length is polynomial in $k$ and $N$.  W[1]-hardness follows from \cref{thm:clique-lower}.  An algorithm running in $\rho_p(d)L^{o(d)}$ time would yield an $f_p(k)N^{o(k)}$ algorithm for Multicolored Clique, contradicting ETH.
\end{proof}

The same encoder yields a tight total-time bound on instances whose generator list and bit length are polynomial in the dimension.

\begin{corollary}[Polynomial-size hard instances]
\label{cor:small-instance-lower}
For every fixed rational $p\in(1,\infty)$, under ETH there is no algorithm for \cref{prob:lvs} with running time $2^{o(d\log d)}\poly(L)$, even on instances with $n,L=d^{O(1)}$ and $5$-sparse columns.  On this restricted family, the known $\mathcal O(n^{d-1}\poly(L))$ upper bound is $2^{O(d\log d)}$.
\end{corollary}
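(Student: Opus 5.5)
The plan is to run the reduction of \cref{thm:exact-hardness} from a size-restricted clique source, for which ETH gives a $2^{o(k\log k)}$ lower bound. The natural choice is the $k\times k$ Clique (or $k\times k$ Permutation Clique) problem of Lokshtanov, Marx and Saurabh: the graph lives on $[k]\times[k]$, and one vertex must be chosen per row so that the chosen vertices are pairwise adjacent. Under ETH this problem has no $2^{o(k\log k)}$ algorithm. It is a Multicolored Clique instance with $k$ color classes of size $N=k$.

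First I would pad each class to a common even size $2m\geq4$ with $m=O(k)$, exactly as in \cref{thm:exact-hardness}. Then I apply \cref{thm:csp-encoder} with $K=\binom{k}{2}$ constraints and $M\leq k^4$ accepted tuples. This gives dimension $d=2k+1$.

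Next I count the size of the output. The label zonogon supplies $m$ generators per block, each inserted with both signs, so there are $2km$ scaffold columns, one anchor, and $M$ caps. Hence $n=O(k^4)=d^{O(1)}$. For the bit length, $\Gamma^{-1}=O_p(m^2)$, $C_0=O_p(\Gamma^{-1})$, and $\tau^{-1},\eta^{-1}$ are polynomial in $m,M,K,k$. The labels from \cref{lem:rational-labels} have encoding length polynomial in $m$ and $\log(1/\eta)$, and each cap normalizer is polynomial. So $L=\poly(k)=d^{O(1)}$, and every column is $5$-sparse. The threshold is the rational midpoint used in \cref{thm:exact-hardness}.

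For the lower bound, suppose an algorithm solves these instances in $2^{o(d\log d)}\poly(L)$ time. Since $d=2k+1$ and $L=\poly(k)$, this running time is $2^{o(k\log k)}$. The reduction itself runs in polynomial time, so we would obtain a $2^{o(k\log k)}$ algorithm for $k\times k$ Clique, which contradicts ETH.

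For the upper bound, \cref{thm:known-xp} (or \cref{prop:turing-xp} with $b$ fixed) gives $n^{d-1}\poly(L)=2^{O(d\log d)}$ on this family, because $n=d^{O(1)}$.

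The main obstacle is citing the right source lower bound. \cref{thm:clique-lower} only states $f(k)N^{o(k)}$, and specializing it to $N=k$ gives just $2^{o(k\log k)}$ with an uncomputable-$f$ caveat. It does not directly yield the uniform statement. I would therefore invoke the $k\times k$ Clique theorem explicitly and check that its instances have $N=k$.
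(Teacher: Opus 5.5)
Your proposal is correct and matches the paper's proof: both invoke the $k\times k$ Clique lower bound of \cite{lms18} directly, pad rows to an even domain of size $O(k)$, apply \cref{thm:csp-encoder} to get $d=2k+1$, $n=2km+M+1=O(k^4)$ and $L=k^{O(1)}$, and substitute into \cref{thm:known-xp} for the $2^{O(d\log d)}$ upper bound. One clarification of your closing remark: \cref{thm:clique-lower} is insufficient not because of a caveat about $f$, but because a lower bound over all Multicolored Clique instances says nothing about the subfamily with $N=k$, which is why the size-restricted source is needed.
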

\begin{proof}
Theorem~2.4 of \cite{lms18} states that, under ETH, $k\times k$ Clique has no $2^{o(k\log k)}$-time algorithm.  Its graph has vertex set $[k]\times[k]$, and the task is to select one vertex from each row so that the selected vertices form a clique.

Create one CSP variable for each row, with its value specifying the selected column.  Let $q$ be the smallest even integer at least $\max\set{k,4}$, and pad each row with isolated dummy labels, so $q=O(k)$.  For each pair of rows, the binary relation consists of their adjacent column pairs.  Thus
\begin{align*}
K
&=\binom{k}{2},
&
M
&\leq Kq^2=O(k^4).
\end{align*}
Apply \cref{thm:csp-encoder} and use the same rational midpoint threshold as in the proof of \cref{thm:exact-hardness}.  With $m=q/2$, the construction has
\begin{align*}
d
&=2k+1,
&
n
&=2km+M+1=kq+M+1=O(k^4).
\end{align*}
Every coordinate has bit length polynomial in $k,m,K,M$, so the total encoding length is $L=k^{O(1)}=d^{O(1)}$.

Consequently, a $2^{o(d\log d)}\poly(L)$-time algorithm for the constructed instances would solve $k\times k$ Clique in
\begin{align*}
2^{o((2k+1)\log(2k+1))}k^{O(1)}
&=2^{o(k\log k)}
\end{align*}
time, contradicting ETH.  Finally, substituting $n,L=d^{O(1)}$ into \cref{thm:known-xp} gives the stated $2^{O(d\log d)}$ upper bound.
\end{proof}

For $p=2$, clearing a common denominator in the constructed columns yields the following stronger arithmetic specialization.

\begin{corollary}[Euclidean integer specialization]
\label{cor:euclidean-integer}
Euclidean longest vector sum is W[1]-hard in the dimension and has the same ETH lower bound even for integer $5$-sparse generators of polynomial encoding length.
\end{corollary}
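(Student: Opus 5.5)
The plan is to specialize the proofs of \cref{thm:exact-hardness} and \cref{cor:small-instance-lower} to $p=2$. When $p=2$ the objective is a quadratic form, $F_2(Ax)=x^\top A^\top Ax$, and scaling behaves exactly: replacing $A$ by $\lambda A$ with $\lambda>0$ rational multiplies every subset value $F_2(Ax)$ by $\lambda^2$. It therefore suffices to run \cref{thm:csp-encoder} with $p=2$, obtaining the rational $5$-sparse matrix $A$ and the rational midpoint threshold $B$ from the proof of \cref{thm:exact-hardness}, and then clear denominators.

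Let $\lambda$ be the least common multiple of all denominators appearing in the entries of $A$. We output $A'\coloneqq\lambda A$ and $B'\coloneqq\lambda^2B$.
\begin{itemize}
\item[\emph{Correctness.}] For each $x\in\set{0,1}^n$ we have $F_2(A'x)=\lambda^2F_2(Ax)$, so $\OPT_2(A')\geq B'$ holds exactly when $\OPT_2(A)\geq B$. The clique/no-clique separation is preserved unchanged.
\item[\emph{Sparsity.}] Scaling never changes the support of a column, so every column of $A'$ is still integer and $5$-sparse.
\item[\emph{Threshold.}] $B'$ is rational. If an integer threshold is wanted, we can further multiply $A'$ by the denominator $\beta$ of $B'$ and the threshold by $\beta^2$, which keeps everything integral.
\end{itemize}

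The main point to check is encoding length. By \cref{thm:csp-encoder}, every entry of $A$ has polynomial encoding length. The lcm of at most $dn$ denominators, each of polynomial bit length, has bit length at most the sum of theirs, which is again polynomial. Hence every entry of $A'$, and also $B'$, has polynomial bit length. In fact the construction uses dyadic values $\Gamma,\tau,\eta$, dyadic label coordinates, and normalizers $\lambda_{rs}$ whose denominators are controlled, so $\lambda$ is modest anyway; the general lcm bound is enough regardless.

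Given this, the dimension stays $2k+1$ and the total length $L'$ is polynomial in $L$. The reduction from Multicolored Clique therefore gives W[1]-hardness. An algorithm running in $\rho(d)L'^{o(d)}$ time would give an $f(k)N^{o(k)}$ algorithm, contradicting ETH by \cref{thm:clique-lower}. The same substitution in \cref{cor:small-instance-lower} transfers the $2^{o(d\log d)}\poly(L)$ lower bound to these integer instances.
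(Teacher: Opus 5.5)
Your proposal is correct and follows the paper's proof: for $p=2$ you multiply every column by a common denominator and the threshold by its square, which preserves all subset comparisons, the dimension, and $5$-sparsity, and you bound the common denominator's bit length by the sum of the input denominator bit lengths. Your extra remarks, making the threshold integral and carrying over \cref{cor:small-instance-lower}, are valid additions that the paper does not spell out.
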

\begin{proof}
For $p=2$, multiply every constructed column by a common denominator and multiply the threshold by its square.  This preserves all subset comparisons, the dimension, and sparsity.  The common denominator has polynomial bit length because the sum of the input denominator bit lengths is polynomial.
\end{proof}

\section{The Precision Frontier}
\label{sec:approximation}

We first give a deterministic fixed-$p$ upper bound and then use the inverse-polynomial gap of the cap encoder to prove that the exponent of $1/\varepsilon$ cannot be sublinear in the dimension.

\begin{definition}[Precision-FPTAS]
\label[definition]{def:precision-fptas}
A precision-FPTAS for fixed-$p$ generator-zonotope norm maximization returns a feasible subset with ratio at least $1-\varepsilon$ in time $f_p(d)\poly(L,1/\varepsilon)$, where the polynomial degree is independent of $d$ and $\varepsilon$.
\end{definition}

\begin{lemma}[Ball--Carlen--Lieb smoothness, unit-sphere form \cite{bcl94}]
\label[lemma]{lem:quadratic-support-loss}
Let $r\geq2$ and define $J_r(u)_j\coloneqq\sign(u_j)\abs{u_j}^{r-1}$. For unit vectors $u,v\in\ell_r^d$,
\begin{align*}
0
&\leq 1-J_r(u)^\top v
\leq \frac{r-1}{2}\norm{u-v}_r^2.
\end{align*}
\end{lemma}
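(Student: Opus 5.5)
We split the statement into its two inequalities. The lower bound $0\leq1-J_r(u)^\top v$ is Hölder's inequality. With $r'=r/(r-1)$,
\begin{align*}
J_r(u)^\top v
&\leq\norm{J_r(u)}_{r'}\norm{v}_r
=\Bigl(\sum_j\abs{u_j}^{(r-1)r'}\Bigr)^{1/r'}
=\norm{u}_r^{r/r'}=1.
\end{align*}
We also record the identity $J_r(u)^\top u=\norm{u}_r^r=1$, which drives the upper bound.

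For the upper bound we show that $\varphi(x)\coloneqq\tfrac12\norm{x}_r^2$ is $(r-1)$-smooth with respect to $\norm{\cdot}_r$, namely
\begin{align*}
\varphi(x+h)
&\leq\varphi(x)+\grad\varphi(x)^\top h+\frac{r-1}{2}\norm{h}_r^2 .
\end{align*}
A direct computation gives $\grad\varphi(x)=\norm{x}_r^{2-r}J_r(x)$, so $\grad\varphi(u)=J_r(u)$ at a unit vector $u$. Take $x=u$ and $h=v-u$. Since $\varphi(u)=\varphi(v)=\tfrac12$, the smoothness inequality becomes
\begin{align*}
0
&\leq J_r(u)^\top(v-u)+\frac{r-1}{2}\norm{u-v}_r^2
=J_r(u)^\top v-1+\frac{r-1}{2}\norm{u-v}_r^2 ,
\end{align*}
which rearranges to the claimed bound.

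It remains to prove the smoothness of $\varphi$. We start from the Ball--Carlen--Lieb two-point inequality for $r\geq2$ \cite{bcl94}:
\begin{align*}
\norm{x+y}_r^2+\norm{x-y}_r^2
&\leq2\norm{x}_r^2+2(r-1)\norm{y}_r^2 .
\end{align*}
Put $g(t)\coloneqq\varphi(x+th)$. Applying the inequality at base point $x+th$ with $y=sh$ gives
\begin{align*}
g(t+s)+g(t-s)-2g(t)
&\leq(r-1)s^2\norm{h}_r^2
\end{align*}
for all $t$ and $s$. So the function $t\mapsto g(t)-\tfrac{r-1}{2}\norm{h}_r^2t^2$ has nonpositive symmetric second differences everywhere. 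Being continuous, it is concave on $\R$. Since $r\geq2$, $g$ is $C^1$, and a concave $C^1$ function lies below its tangent line at $0$. This gives
\begin{align*}
g(1)
&\leq g(0)+g'(0)+\frac{r-1}{2}\norm{h}_r^2 ,
\end{align*}
which is exactly the smoothness inequality.

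The main obstacle is keeping the sharp constant $(r-1)/2$. The naive route uses the two-point inequality at $t=0$ only and then bounds $g(-1)$ from below by convexity. That yields $g(1)\leq g(0)+g'(0)+(r-1)\norm{h}_r^2$ and loses a factor of two. Applying the inequality at every base point $x+th$ and concluding concavity avoids this loss, so this is the step to write carefully. The required continuity and differentiability of $g$ follow from $r\geq2$.
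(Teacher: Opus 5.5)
Your proof is correct. The paper gives no argument beyond attributing the lemma to \cite{bcl94}, and your derivation supplies the reduction that citation leaves implicit, with the sharp constant intact: H\"older for the lower bound, then the two-point inequality at every base point $x+th$, midpoint concavity of $g(t)-\frac{r-1}{2}\norm{h}_r^2t^2$, and the tangent line at the unit vector $u$ with $h=v-u$. One small addition makes the citation exact: Ball--Carlen--Lieb state the $r\geq2$ inequality as $\bigl(\tfrac12(\norm{x+y}_r^r+\norm{x-y}_r^r)\bigr)^{2/r}\leq\norm{x}_r^2+(r-1)\norm{y}_r^2$, and your sum-of-squares form follows from it by the power-mean inequality $\tfrac12(a^2+b^2)\leq\bigl(\tfrac12(a^r+b^r)\bigr)^{2/r}$ for $r\geq2$.
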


The Euclidean fixed-cardinality scheme in \cite[Theorem~5]{bgg+08} already has $\varepsilon^{-(d-1)/2}$ dependence; the unconstrained case follows by scanning the cardinality.  The next theorem extends this dependence to every fixed rational $p\in(1,\infty)$ and records a subset-returning rational implementation.

\begin{theorem}[Deterministic $(1-\varepsilon)$ approximation]
\label{thm:precision-upper}
Fix a rational $p\in(1,\infty)$. Given $A\in\Q^{d\times n}$ and a rational $\varepsilon\in(0,1/2)$, one can deterministically return $\widehat x\in\set{0,1}^n$ such that
\begin{align*}
\norm{A\widehat x}_p
&\geq(1-\varepsilon)R_p(A)
\end{align*}
in
\begin{align*}
f_p(d)\varepsilon^{-(d-1)/2}\poly(L,\log(1/\varepsilon))
\end{align*}
bit operations and polynomial working space.
\end{theorem}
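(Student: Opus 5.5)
We will use the support-function form of the objective. For $q$ dual to $p$,
\begin{align*}
R_p(A)=\max_{\norm{y}_q=1}h_A(y),
\end{align*}
and every direction $y$ yields a feasible subset $S(y)$ with $\norm{z(y)}_p\geq y^\top z(y)=h_A(y)$ by H\"older's inequality. It therefore suffices to deterministically enumerate a finite net $\mathcal N$ of directions on the $\ell_q$ unit sphere, evaluate each candidate subset $S(y)$, and return the best one. A net of mesh $\delta$ has size $f_p(d)\delta^{-(d-1)}$. Taking $\delta\approx\sqrt\varepsilon$ gives the claimed $\varepsilon^{-(d-1)/2}$ count. This is possible because the loss is quadratic in the mesh, not linear.

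\textbf{Quadratic loss.} Let $z^\ast$ be an optimal vertex and set $u\coloneqq z^\ast/\norm{z^\ast}_p$. Its norming functional is $y^\ast=J_p(u)$, which is a unit vector in $\ell_q$. If $y\in\mathcal N$, then
\begin{align*}
h_A(y)\geq y^\top z^\ast=R_p(A)\,y^\top u.
\end{align*}
So we need $y^\top u\geq1-\varepsilon$ for some net point $y$.

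We parametrize the net on the $\ell_p$ side. For a unit vector $v$ in $\ell_p$, take $y=J_p(v)$. If $p\geq2$, \cref{lem:quadratic-support-loss} with $r=p$ gives
\begin{align*}
1-J_p(v)^\top u\leq\tfrac{p-1}{2}\norm{u-v}_p^2.
\end{align*}
Hence a $\sqrt{2\varepsilon/(p-1)}$-net of the $\ell_p$ unit sphere suffices. For $1<p<2$, we instead take net points $w$ on the $\ell_q$ sphere (now $q>2$) near $y^\ast$ itself. We apply \cref{lem:quadratic-support-loss} with $r=q$ to the pair $(y^\ast,w)$, using $J_q(y^\ast)=u$. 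This gives
\begin{align*}
1-u^\top w=1-J_q(y^\ast)^\top w\leq\tfrac{q-1}{2}\norm{y^\ast-w}_q^2.
\end{align*}
Since $y^\ast$ lies on the $\ell_q$ sphere, a $\sqrt{2\varepsilon/(q-1)}$-net of that sphere suffices. In both cases the net can be built deterministically. We grid each of the $2d$ facets of the cube $[-1,1]^d$ at spacing $c_p\sqrt\varepsilon$ and radially project the grid points to the sphere. This gives $f_p(d)\varepsilon^{-(d-1)/2}$ points, and the radial projection distorts distances by at most a factor depending on $d$ and $p$, which is absorbed into $f_p(d)$.

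\textbf{Rational implementation.} The radial projection and $J_p$ involve irrational values. However, only the sign pattern $\{i:y^\top a_i\geq0\}$ is used. So we replace each net direction by a rational direction with $O(\log(d/\varepsilon))$ bits per coordinate that approximates it within $\varepsilon$ in relative terms. This extra first-order perturbation of $y^\top u$ is at most $O_p(d)\cdot\varepsilon$. We absorb it by running the scheme with $\varepsilon/C_p(d)$ in place of $\varepsilon$, which changes only $f_p(d)$. For each candidate we form $z(y)$ in $\poly(L,\log(1/\varepsilon))$ time.

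To pick the best candidate we keep a running maximizer and compare each new candidate against it using the exact comparison of \cref{prop:turing-xp}, which costs $b^{O(d)}\poly(L)$. This keeps the working space polynomial, because the net is generated lexicographically and never stored. Alternatively, we may compare approximate values to relative accuracy $\varepsilon$, which costs only another constant factor in the guarantee.

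\textbf{Main obstacle.} The step I expect to be delicate is the uniform quadratic-loss estimate in the regime $1<p<2$. There, \cref{lem:quadratic-support-loss} cannot be applied with exponent $p$, and one must switch to the dual sphere as above. It also has to be checked that the $u$ produced there is exactly $J_q(y^\ast)$, and that the constants stay uniform even when $u$ has coordinates near zero. Because the lemma holds for all unit vectors, there is no boundary issue. A careful approach is to fix the parametrization sphere to be whichever of $\ell_p$ and $\ell_q$ has exponent at least $2$, and to let $f_p(d)$ absorb all norm-equivalence factors between $\ell_2$, $\ell_p$, $\ell_q$, and $\ell_\infty$ on the grid.
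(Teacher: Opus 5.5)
Your proposal is correct and follows the paper's proof at the level of ideas. Both use a net on the unit sphere of $\ell_r$ with $r=\max\set{p,q}\geq2$ at mesh $\Theta_p(\sqrt\varepsilon)$. Both apply \cref{lem:quadratic-support-loss} on the primal sphere when $p\geq2$ and on the dual sphere when $p<2$, using $J_q(J_p(u))=u$. Both then output the directional subsets $S(\widehat y)$ of rationalized directions.

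The differences lie in the Turing implementation.

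\textbf{Net construction.} The paper does not radially project a cube grid. It uses $2d$ charts with coordinates $u_j=\sign(\theta_j)\abs{\theta_j}^e$ for grid values $\theta_j\in h\mathbb Z$, and solves for the omitted coordinate. Every net point then lies exactly on the sphere, and each direction needs only one fixed-degree $a$th root. Your radial projection instead requires approximating $\norm{x}_r$ and then $J_p$. That is routine interval arithmetic, but it is unstated.

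\textbf{Candidate selection.} The paper keeps the candidate with the largest exact rational support $\widehat y^\top A\mathbf 1_{S(\widehat y)}$. This suffices because every rationalized direction has $\norm{\widehat y}_q\leq1+\eta$, so no algebraic numbers are ever compared.

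Your primary selection rule, exact comparison via \cref{prop:turing-xp}, meets the time bound but not the polynomial-space claim. Its resultant has degree up to $b^{2d}$ and coefficients of bit length $b^{O(d)}\poly(L)$. For nonintegral $p$ it therefore needs $b^{O(d)}\poly(L)$ space. Your remark that lexicographic generation keeps the space polynomial overlooks this cost. Your fallback, comparing approximations of $F_p$ to relative accuracy $\varepsilon$, does meet the space requirement, as does the paper's support comparison; use one of those.

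\textbf{Normalization.} When you convert a rationalized support into a norm bound, you must also divide by $\norm{\widehat y}_q$. With coordinatewise relative error $\varepsilon$ this costs only a $(1+\varepsilon)$ factor, which your rescaling of $\varepsilon$ absorbs.
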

\begin{proof}
Take coprime positive integers $a,b$ with $p=a/b$, define $q\coloneqq p/(p-1)$ and $r\coloneqq\max\set{p,q}\geq2$, and take coprime positive integers $c,e$ with $r=c/e$.  Thus $c=a$ and $e$ is either $b$ or $a-b$.  Cover the unit $\ell_r$ sphere by the $2d$ charts on which one signed coordinate has magnitude at least $d^{-1/r}$.

We use a grid adapted to the denominator $e$.  Choose a dyadic mesh $h$ as the largest power of two at most $\delta/(4ed^2)$.  On the chart with omitted coordinate $\ell$ and sign $\sigma$, enumerate grid values $\theta_j\in[-1,1]\cap h\mathbb Z$ for $j\neq\ell$ satisfying $\sum_{j\neq\ell}\abs{\theta_j}^c\leq1$, and set
\begin{align*}
u_j
&\coloneqq\sign(\theta_j)\abs{\theta_j}^e
&&\text{for }j\neq\ell,
&
u_\ell
&\coloneqq
\sigma\left(1-\sum_{j\neq\ell}\abs{\theta_j}^c\right)^{e/c}.
\end{align*}
Then $\norm{u}_r=1$.  To verify the covering radius, take a target $v$ in this chart, define $\vartheta_j\coloneqq\sign(v_j)\abs{v_j}^{1/e}$, and round each $\vartheta_j$ toward zero to obtain $\theta_j$.  For $j\neq\ell$,
\begin{align*}
\abs{u_j-v_j}
&\leq eh,
&
0\leq\abs{v_j}^r-\abs{u_j}^r
&\leq ch.
\end{align*}
Since $\abs{v_\ell}^r\geq1/d$, the derivative of $s\mapsto s^{1/r}$ on the relevant interval is at most $d^{1-1/r}/r$.  Hence
\begin{align*}
\abs{u_\ell-v_\ell}
&\leq e(d-1)d^{1-1/r}h
\leq ed^2h.
\end{align*}
Thus $\norm{u-v}_r\leq\norm{u-v}_1\leq2ed^2h\leq\delta$, and the net has size at most
\begin{align*}
2d\left(1+\frac{16ed^2}{\delta}\right)^{d-1}.
\end{align*}
This grid also gives the required certified Turing implementation.  The quantity $s\coloneqq1-\sum_{j\neq\ell}\abs{\theta_j}^c$ is rational.  If $s>0$, isolate the nonnegative root $\rho=s^{1/a}$ by bisection of $X^a-s$; if $s=0$, take $\rho=0$.  When $p\geq2$, the non-omitted coordinates of $u$ are signed $b$th powers of rationals, the non-omitted coordinates of $J_p(u)$ are the corresponding signed $(a-b)$th powers, and the two omitted coordinates are $\sigma\rho^b$ and $\sigma\rho^{a-b}$.  When $p<2$, the net is on the dual sphere and its omitted coordinate is $\sigma\rho^{a-b}$.  Thus every needed direction involves only one fixed-degree $a$th root, rather than a root of a sum of rational powers.  Coordinatewise rational bisection therefore produces a rational direction within dual-norm error $\eta$ in time polynomial in $\log d$, $\log(1/\delta)$, and $\log(1/\eta)$ for fixed $p$.

If $A=0$, return the empty subset.  Otherwise some singleton subset has positive norm.  Let $z^*\coloneqq Ax^*$ be optimal and put $R\coloneqq\norm{z^*}_p>0$. If $p\geq2$, net the primal sphere near $v^*\coloneqq z^*/R$ and map a nearby point $u$ to the dual direction $J_p(u)$. If $p<2$, put $y^*\coloneqq J_p(v^*)$ and net the dual sphere near $y^*$. In either case, \cref{lem:quadratic-support-loss} supplies a unit dual direction $y$ such that
\begin{align*}
y^\top v^*
&\geq1-\frac{r-1}{2}\delta^2.
\end{align*}
Let $\widehat y\in\Q^d$ be its rationalization with $\norm{\widehat y-y}_q\leq\eta$. The directional subset
\begin{align*}
S(\widehat y)
&\coloneqq\set{i:\widehat y^\top a_i\geq0}
\end{align*}
maximizes the rational support $\widehat y^\top Ax$ over all subsets. Hence
\begin{align*}
\norm{A\mathbf{1}_{S(\widehat y)}}_p
&\geq \frac{\widehat y^\top A\mathbf{1}_{S(\widehat y)}}{\norm{\widehat y}_q} \\
&\geq R\frac{1-(r-1)\delta^2/2-\eta}{1+\eta}.
\end{align*}
Choose $\delta$ as the largest dyadic with $\delta^2\leq\varepsilon/[2(r-1)]$ and set $\eta\coloneqq\varepsilon/8$. Scanning all rationalized directions and retaining the one with largest rational support returns a $(1-\varepsilon)$ solution. The net size is $f_p(d)\varepsilon^{-(d-1)/2}$, each scan uses polynomially many rational operations, and directions can be generated one at a time.
\end{proof}

Combining \cref{thm:precision-upper} with \cref{prop:turing-xp} gives the better of the precision-dependent bound and the $n^{O(d)}b^{O(d)}\poly(L)$ exact Turing baseline. The real-arithmetic exponent also follows by combining classical polyhedral approximation with polyhedral-norm longest vector sum; the theorem above records the subset-returning rational implementation needed for the common upper/lower model.

\begin{lemma}[Quantitative cap gap]
\label{lem:quantitative-gap}
For a $k$-Multicolored-Clique instance with color classes of size at most $N$, the construction in \cref{thm:csp-encoder} can be instantiated so that
\begin{align*}
\tau
&\geq\frac{1}{c_p k^4N^8}.
\end{align*}
There are rational values $Y>U$ such that a YES instance has $\OPT_p(A)\geq Y$, a NO instance has $\OPT_p(A)\leq U$, and
\begin{align*}
Y-U
&>\frac{4\tau}{5},
&
Y
&\leq k+2.
\end{align*}
\end{lemma}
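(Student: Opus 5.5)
We instantiate \cref{thm:csp-encoder} on the Multicolored Clique instance exactly as in the proof of \cref{thm:exact-hardness}, and then track the size of $\tau$ through the explicit parameter chain.

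\textbf{Instance parameters.} After padding to a common even size $2m\geq4$ with $2m\leq N+2$, we have $k$ variables, $K=\binom k2$ constraints, and $M\leq KN^2\cdots$ accepted tuples. Counting both orientations and the padding, $M\leq K(2m)^2=O(k^2N^2)$. The dimension is $d=2k+1$.

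\textbf{Lower bound on $\tau$.} Recall $\tau>T_\tau/2$ with
\begin{align*}
T_\tau=\min\set{1,\frac{1}{16\overline MC_0},\frac{1}{4dG_p\overline MC_0},\frac{1}{4B_p\overline MC_0^2},\frac{1}{10B_p\overline M^2C_0^2}}.
\end{align*}
The constants $G_p,B_p$ depend only on $p$. By \cref{lem:cap-normalization} and \cref{lem:rational-labels},
\begin{align*}
C_0\leq c_p\Gamma^{-1}<2c_pT_\Gamma^{-1}=O_p(m^2)=O_p(N^2).
\end{align*}
We bound each term of the minimum using $\overline M=O(k^2N^2)$ and $d=O(k)$:
\begin{itemize}
\item the second term is $\Omega_p(1/(k^2N^4))$;
\item the third term is $\Omega_p(1/(k^3N^4))$;
\item the fourth term is $\Omega_p(1/(k^2N^6))$;
\item the fifth term is $\Omega_p(1/(\overline M^2C_0^2))=\Omega_p(1/(k^4N^4\cdot N^4))=\Omega_p(1/(k^4N^8))$.
\end{itemize}
The fifth term is the binding one. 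Taking the minimum and halving gives $\tau\geq 1/(c_pk^4N^8)$ for a suitable constant $c_p$.

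\textbf{Main obstacle.} The delicate point is the fifth term. Its exponent on $N$ requires both $M=O(k^2N^2)$ rather than something like $M\leq N^4$, and $C_0=O(N^2)$ via $\Gamma^{-1}=O(m^2)$. I would verify these two bookkeeping facts carefully, including the case where $m$ is enlarged by padding to $\max\{N,4\}$. That padding only affects constants.

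\textbf{Choice of $Y$ and $U$.} Set $Y$ to the lower endpoint of \eqref{eq:csp-interval} at $s=K$, and $U$ to the upper endpoint at $s=K-1$. Both are rational, since $\tau,\eta,\xi$ are dyadic or rational.

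\begin{itemize}
\item \emph{YES instances.} A clique gives an assignment satisfying all $K$ constraints. Its canonical subset has value at least $Y$, so $\OPT_p(A)\geq Y$.
\item \emph{NO instances.} Every global maximizer is canonical for an assignment satisfying some $s\leq K-1$ constraints. The upper endpoint of \eqref{eq:csp-interval} is increasing in $s$, so $\OPT_p(A)\leq U$.
\item \emph{Gap.} The separation statement of \cref{thm:csp-encoder} gives $Y-U>4\tau/5$.
\end{itemize}

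\textbf{Upper bound on $Y$.} We have $Y\leq k+1+\tau K(1-\xi)$. The choice $T_\tau\leq 1/(16\overline MC_0)$ does not by itself give $\tau K\leq1$. However, $C_0\geq1$ can be assumed, since enlarging the bound $C_0$ is harmless. Also $\overline M\geq K$, because each constraint in a clique instance with a potential YES has at least one listed tuple; if some constraint has no tuples, the instance is trivially NO and can be mapped to a fixed NO instance. Hence $\tau K\leq\tau\overline M\leq1/16$, and therefore $Y\leq k+2$.
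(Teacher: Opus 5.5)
Your proposal is correct and follows the paper's proof essentially step for step. It uses the same bookkeeping $\overline M=O(k^2N^2)$ and $C_0=O_p(\Gamma^{-1})=O_p(N^2)$, with the last term of $T_\tau$ giving the $k^4N^8$ bound, the same $Y$ and $U$ as the endpoints of \eqref{eq:csp-interval} at $s=K$ and $s=K-1$, and the same use of $C_0\geq1$ and $M\geq K$ to get $\tau K\leq1/16$. The paper instead restricts $Y\leq k+2$ to YES instances, where $M\geq K$ holds automatically and which is all \cref{thm:precision-lower} needs, so your preprocessing of empty constraints into a fixed NO instance is harmless but unnecessary.
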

\begin{proof}
Pad the label domain with isolated dummy labels so that its size is $2m\geq4$ with $m\leq N$ up to a constant factor.  Choose $\Gamma=\Theta_p(m^{-2})$, and enlarge the uniform cap bound if necessary so that
\begin{align*}
1\leq C_0
&\leq c_p' N^2.
\end{align*}
There is one cap per accepted edge tuple and pair constraint, so $M\leq\binom{k}{2}N^2\leq k^2N^2$.  Choose $\tau$ as the largest dyadic satisfying the upper bounds in the proof of \cref{thm:csp-encoder}.  Substituting $d=2k+1$, the bounds on $M$, and the bound on $C_0$ shows that every term in that minimum is at least $1/(c_p k^4N^8)$.  Maximal dyadic rounding changes only the constant.

Let
\begin{align*}
Y
&\coloneqq k+1-k\eta+\tau K(1-\xi)-\frac{\tau}{20}, \\
U
&\coloneqq k+1+k\eta+\tau(K-1)(1+\xi)+\frac{\tau}{20}.
\end{align*}
The interval separation in \cref{thm:csp-encoder} gives $Y-U>4\tau/5$.  On a YES instance $M\geq K$, and the parameter choice gives $\tau K\leq\tau MC_0\leq1/16$.  Hence $Y\leq k+2$.
\end{proof}

\begin{theorem}[Precision lower bounds]
\label{thm:precision-lower}
Fix a rational $p\in(1,\infty)$.  Unless FPT$=$W[1], no precision-FPTAS exists, even for $5$-sparse generators.  Under ETH, there is no algorithm with running time
\begin{align*}
\rho_p(d)(1/\varepsilon)^{h(d)}L^c
\end{align*}
for any computable $\rho_p$, constant $c$, and function $h(d)=o(d)$.
\end{theorem}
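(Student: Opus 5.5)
The plan is to turn an approximation algorithm into an exact decider for the instances of \cref{lem:quantitative-gap}, choosing $\varepsilon$ small enough that the gap $Y-U$ cannot be blurred by a $(1-\varepsilon)$ ratio. Given a $k$-Multicolored Clique instance, we build the matrix $A$ of \cref{thm:csp-encoder} in dimension $d=2k+1$ with $5$-sparse columns, together with the rational thresholds $Y>U$ of \cref{lem:quantitative-gap}. We set
\begin{align*}
\varepsilon
&\coloneqq\frac{\tau}{8p(k+2)},
\end{align*}
rounded down to a dyadic. This is rational, has encoding length $O(\log(kN))$, and satisfies $1/\varepsilon\leq c_p'k^5N^8$.

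We run the hypothetical algorithm and obtain $\widehat x$. Its value $F_p(A\widehat x)$ is then compared exactly with the threshold $(Y+U)/2$ using \cref{prop:turing-xp}, which costs $b^{O(d)}\poly(L)$ and is absorbed into $\rho_p(d)$. On a NO instance every subset satisfies $F_p\leq U$, so the answer is correct. On a YES instance we have $F_p(A\widehat x)\geq(1-\varepsilon)^p\OPT_p(A)\geq(1-p\varepsilon)Y$ by Bernoulli's inequality. Since $Y\leq k+2$, this gives
\begin{align*}
F_p(A\widehat x)
&\geq Y-p\varepsilon(k+2)=Y-\frac{\tau}{8}.
\end{align*}
Because $Y-U>4\tau/5$, the value $Y-\tau/8$ exceeds the midpoint $(Y+U)/2$, so the YES instance is also recognized. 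The reduction is polynomial in $k$ and $N$ and produces $L=\poly(k,N)$.

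The two lower bounds then follow by substituting running times.
\begin{itemize}
\item For the first claim, a precision-FPTAS runs in $f_p(d)\poly(L,1/\varepsilon)=f_p(2k+1)\poly(k,N)$ time. This would be an FPT algorithm for Multicolored Clique, so \cref{thm:clique-lower} gives FPT$=$W[1].
\item For the second claim, a running time $\rho_p(d)(1/\varepsilon)^{h(d)}L^c$ becomes $\rho_p(2k+1)(c_p'k^5N^8)^{h(2k+1)}\poly(k,N)^c$. Since $h(2k+1)=o(k)$ and $c$ is a constant, this is $f(k)N^{o(k)}$, which contradicts ETH by \cref{thm:clique-lower}.
\end{itemize}

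The main obstacle is the bookkeeping in the $o(k)$ exponent. The factors $k^{5h(2k+1)}$ and $\rho_p(2k+1)$ must be placed into $f(k)$, which requires $h$ to be computable or at least bounded by a computable $o(d)$ function. If it is not, we would use the standard trick of replacing $h$ by a computable nondecreasing majorant that is still $o(d)$, the same device used to state the ETH form of \cref{thm:clique-lower}.

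A second point to check is that the inverse-polynomial bound on $\tau$ from \cref{lem:quantitative-gap} holds uniformly, so that $\log(1/\varepsilon)=O(\log(kN))$. The remaining constants depend only on $p$.
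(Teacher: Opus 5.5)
Your proposal is correct and is essentially the paper's argument: the instances of \cref{lem:quantitative-gap}, an accuracy of order $\tau/(p(k+2))$ (the paper takes $\tau/(10p(k+2))$), Bernoulli's inequality with $Y\leq k+2$ to keep a YES output above the gap, and the substitution $d=2k+1$, $1/\varepsilon=O_p(k^5N^8)$; the only difference is that you decide via the exact comparison of \cref{prop:turing-xp}, whose $b^{O(d)}$ factor is absorbed into $\rho_p(d)$, whereas the paper encloses $F_p(A\widehat x)$ to additive error below $\tau/100$ by bisection. The paper's proof also passes over your computability caveat about $h$, but a computable $o(d)$ majorant of an arbitrary $h$ need not exist, so the sound fix is to take $h$ computable rather than to invoke a majorant.
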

\begin{proof}
Apply \cref{lem:quantitative-gap} and request accuracy
\begin{align*}
\varepsilon_{\mathrm{app}}
&\coloneqq\frac{\tau}{10p(k+2)}.
\end{align*}
On a YES instance, a returned $(1-\varepsilon_{\mathrm{app}})$ solution $\widehat z$ satisfies, by Bernoulli's inequality and $Y\leq k+2$,
\begin{align*}
F_p(\widehat z)
&\geq(1-\varepsilon_{\mathrm{app}})^pY \\
&\geq Y-p\varepsilon_{\mathrm{app}}Y \\
&\geq Y-\frac{\tau}{10}
>U+\frac{7\tau}{10}.
\end{align*}
On a NO instance every feasible subset has power at most $U$.  The returned subset sum is rational, and its coordinate powers can be enclosed to total additive error below $\tau/100$ by fixed-degree bisection.  This decides the source instance in polynomial postprocessing time.

By \cref{lem:quantitative-gap},
\begin{align*}
\varepsilon_{\mathrm{app}}^{-1}
&=O_p(k^5N^8).
\end{align*}
A precision-FPTAS would therefore solve Multicolored Clique in $f_p(k)N^{O(1)}$ time, contradicting W[1]-hardness.  For the ETH statement, an algorithm with exponent $h(d)=o(d)$ and $d=2k+1$ would run in
\begin{align*}
\rho_p(2k+1)\left(O_p(k^5N^8)\right)^{h(2k+1)}\poly(k,N)
&=\rho_p'(k)N^{o(k)},
\end{align*}
contradicting \cref{thm:clique-lower}.
\end{proof}

Among algorithms with fixed-degree polynomial dependence on $L$, \cref{thm:precision-upper,thm:precision-lower} give an upper exponent $(d-1)/2$ for $1/\varepsilon$, while ETH excludes every $o(d)$ exponent.

\section{Positive-Output ReLU Consequence}

\begin{proposition}[ReLU--zonotope correspondence, Section~2 of \cite{fgh+25}]
\label{prop:relu-duality}
Let $p,q\in(1,\infty)$ be dual exponents, let $A\coloneqq(a_1,\ldots,a_n)\in\R^{d\times n}$, and define $f_A(u)\coloneqq\sum_{i=1}^n\ReLU(a_i^\top u)$.  Then
\begin{align*}
\Lip_q(f_A)
&=\max_{z\in Z(A)}\norm{z}_p
=R_p(A).
\end{align*}
\end{proposition}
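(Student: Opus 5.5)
Both sides of the identity are support functions, and I would prove it with the two elementary dualities they carry: positive homogeneity together with the dual-norm formula, and the support-function representation of a zonotope.

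First I identify $f_A$ as the support function of $Z(A)$. For every $u\in\R^d$,
\begin{align*}
f_A(u)
&=\sum_{i=1}^n\max\set{0,a_i^\top u}
=\max_{x\in[0,1]^n}u^\top Ax
=\max_{z\in Z(A)}u^\top z
=h_A(u),
\end{align*}
because a linear functional on the cube is maximized coordinatewise; this is the displayed identity for $h_A$ in the preliminaries. In particular $f_A$ is convex, positively homogeneous, and finite everywhere.

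Next I bound the Lipschitz constant from above. For any $u,v$, choose $z_u\in Z(A)$ attaining $f_A(u)$. Then $f_A(u)-f_A(v)\leq z_u^\top(u-v)\leq\norm{z_u}_p\norm{u-v}_q\leq R_p(A)\norm{u-v}_q$ by H\"older's inequality. Exchanging $u$ and $v$ gives $\Lip_q(f_A)\leq R_p(A)$.

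For the lower bound, let $z^*\in Z(A)$ satisfy $\norm{z^*}_p=R_p(A)$. The case $z^*=0$ is trivial, so assume $z^*\neq0$. Put $u^*\coloneqq J_p(z^*)/\norm{z^*}_p^{p-1}$. Then $\norm{u^*}_q=1$, since $(p-1)q=p$, and $u^{*\top}z^*=\norm{z^*}_p$. Positive homogeneity gives $f_A(0)=0$, so
\begin{align*}
f_A(u^*)-f_A(0)=h_A(u^*)\geq u^{*\top}z^*=R_p(A)=R_p(A)\norm{u^*-0}_q,
\end{align*}
hence $\Lip_q(f_A)\geq R_p(A)$. Finally, the middle equality $\max_{z\in Z(A)}\norm{z}_p=R_p(A)$ is the vertex argument already recorded in the introduction.

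I do not expect a real obstacle. The only point needing care is the normalization of the dual vector in the lower bound, namely checking that $\norm{J_p(z)}_q=\norm{z}_p^{p-1}$.
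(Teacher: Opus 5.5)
Your proof is correct: you identify $f_A$ with the support function $h_A$ of $Z(A)$, bound increments by H\"older's inequality, and attain the bound at the normalized dual vector $J_p(z^*)/\norm{z^*}_p^{p-1}$, which is exactly the support-function duality the paper invokes. The paper gives no separate proof and simply cites Section~2 of \cite{fgh+25}, so your self-contained derivation follows the same route it relies on.
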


\begin{corollary}[Positive-output ReLU Lipschitz hardness]
\label{cor:relu-hardness}
Fix a rational $q\in(1,\infty)$ and let $p\coloneqq q/(q-1)$.  Given $A\in\Q^{d\times n}$ and $B\in\Q_{>0}$, deciding whether $\Lip_q(f_A)^p\geq B$ is W[1]-hard in $d$ and admits no $\rho_q(d)L^{o(d)}$ algorithm under ETH, even when all output weights are $+1$, all biases are zero, and all hidden weights are $5$-sparse.  The precision bounds of \cref{thm:precision-upper,thm:precision-lower} transfer with the same dependence on $d$ and $\varepsilon$.
\end{corollary}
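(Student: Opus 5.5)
The plan is to reduce directly to \cref{thm:exact-hardness} and \cref{thm:precision-upper,thm:precision-lower} through the identity of \cref{prop:relu-duality}. The map $A\mapsto f_A$ is the identity on data: the hidden weight matrix of $f_A$ is $A$ itself, the output weights are all $+1$, and the biases are zero. So an instance $(A,B)$ of \cref{prob:lvs} at exponent $p$ is literally the same bit string as the network instance $(A,B)$ at exponent $q=p/(p-1)$. The dimension $d$, the encoding length $L$, and the column sparsity are unchanged.

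For the exact part, I first note that $q\mapsto p=q/(q-1)$ is a bijection of the rational numbers in $(1,\infty)$, so every fixed rational $q$ gives a fixed rational $p$. By \cref{prop:relu-duality}, $\Lip_q(f_A)^p=R_p(A)^p=\OPT_p(A)$, so the question $\Lip_q(f_A)^p\geq B$ is exactly membership of $(A,B)$ in $\mathrm{LVS}_p$. The instances produced in the proof of \cref{thm:exact-hardness} have $5$-sparse columns, and this sparsity is precisely the $5$-sparsity of the hidden weights. The reduction is parameter-preserving because $d$ is unchanged, so W[1]-hardness transfers. Any $\rho_q(d)L^{o(d)}$ algorithm for the network problem would therefore decide $\mathrm{LVS}_p$ in the same time, which contradicts the ETH part of \cref{thm:exact-hardness} once we set $\rho_p\coloneqq\rho_{p/(p-1)}$.

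For the precision part, I would read a $(1-\varepsilon)$-approximation in the network setting as either a value or a certificate within factor $1-\varepsilon$ of $\Lip_q(f_A)=R_p(A)$. The upper bound transfers directly. The subset $\widehat x$ returned by \cref{thm:precision-upper} certifies the lower bound $\norm{A\widehat x}_p\geq(1-\varepsilon)\Lip_q(f_A)$. Equivalently, the returned dual direction $\widehat y$, normalized in $\ell_q$, is an input direction whose directional slope $f_A(\widehat y)/\norm{\widehat y}_q$ has the same guarantee. For the lower bound, an approximation algorithm for $\Lip_q$ yields an approximation of $R_p(A)$ with the same ratio. The proof of \cref{thm:precision-lower} only used the approximate value, through Bernoulli's inequality and comparison against $U$ and $Y$, so it applies verbatim, and the dependence on $d$ and $\varepsilon$ carries over unchanged.

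The main obstacle I expect is a modeling subtlety in this last step. If the network algorithm outputs only a numerical estimate rather than a subset, I need to check that the postprocessing in \cref{thm:precision-lower} still works. It does, because the estimate itself separates $Y$ from $U$ once it is raised to the $p$th power with certified rational enclosure. The remaining steps are bookkeeping.
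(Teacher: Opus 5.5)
Your proposal is correct and follows the paper's proof, which applies \cref{prop:relu-duality} to \cref{thm:exact-hardness,thm:precision-upper,thm:precision-lower} with each generator becoming one hidden unit of output weight $+1$ and zero bias. Your extra remarks are correct bookkeeping that the paper leaves implicit: $q\mapsto q/(q-1)$ is a bijection of the rationals in $(1,\infty)$, and for the lower bound a one-sided value estimate $\widehat R\in[(1-\varepsilon)R_p(A),R_p(A)]$ already separates $Y$ from $U$.
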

\begin{proof}
Apply \cref{prop:relu-duality} to \cref{thm:exact-hardness,thm:precision-upper,thm:precision-lower}; each generator becomes one hidden unit, so all stated restrictions are preserved.
\end{proof}

For $p=2$, the identity $x^\top A^\top Ax=\norm{Ax}_2^2$ also gives W[1]-hardness and the same ETH lower bound for factorized positive-semidefinite binary quadratic maximization parameterized by the supplied factor rank. Both sides of the precision frontier transfer under the same identity.

\section{Related Work}

\paragraph{Norm maximization and representation.}
The parameterized complexity of fixed-$\ell_p$ norm maximization over H-presented symmetric polytopes was classified in \cite{kkw15}.  Its hardness construction intersects a product of curved planar scaffolds with symmetric strips that exclude forbidden label pairs.  Our construction replaces each graph-dependent cut by a positive cap generator.  This distinction is necessary: intersecting $[-1,1]^3$ with $\abs{x_1+x_2+x_3}\leq2$ creates a triangular face on $x_1+x_2+x_3=2$, and hence does not produce a zonotope.  Thus the H-presented reduction does not preserve the generator model.

The generator problem also appears as longest vector sum, low-rank convex QUBO, and induced matrix-norm computation.  Exact algorithms enumerate arrangement cells or zonotope vertices in $n^{O(d)}$ time \cite{ffl05,kl10,mkp14,she20}.  The generator representation was isolated as an open parameterized case in \cite{fgh+25,fgh+26}, including through its equivalence to positive-output two-layer ReLU Lipschitz computation.

\paragraph{Approximation and containment.}
For scaled zonotope containment, \cite{err+26} gives a randomized $O(\sqrt d)$ approximation, a nearly matching $\Omega(\sqrt{d/\log d})$ oracle lower bound, and sparsification results.  Its objective $\max\set{s>0:sZ\subseteq Q}$ with oracle-given $Q$ differs from fixed-$\ell_p$ maximization over $Z$.

For approximation, \cite{she18} gives randomized fixed-accuracy guarantees under an arbitrary evaluable norm.  The order reduction in \cite{fgh+26}, based on Lewis-weight embeddings \cite{cp15,brr23}, gives another randomized FPT approximation followed by vertex enumeration.  For explicitly polyhedral norms, \cite{she20} optimizes one directional subset per facet.

Classical approximation of smooth convex bodies by polytopes gives the $\varepsilon^{-(d-1)/2}$ facet scale \cite{bi75,adm24}. Combined with \cite{she20}, it yields the real-arithmetic exponent in \cref{thm:precision-upper}; our formulation gives a rational, subset-returning implementation for fixed $\ell_p$.

The approximation lower bound in \cite{kkw15} applies to H-presented polytopes. For generator zonotopes, \cref{thm:precision-lower} obtains the corresponding obstruction directly from the quantitative gap in our cap construction.

\PaperPrintBibliography

\PaperEndMatter

\end{document}